\newtheorem{theorem}{Theorem}
\newtheorem{proposition}{Proposition}
\newtheorem{assumption}{Assumption}
\newtheorem{lemma}{Lemma}
\newtheorem{remark}{Remark}
\newtheorem{definition}{Definition}
\newcommand\oprocendsymbol{\hbox{$\square$}}
\newcommand\oprocend{\relax\ifmmode\else\unskip\hfill\fi\oprocendsymbol}
\DeclareMathOperator*{\vecc}{vec}
\DeclareMathOperator{\Int}{Int}
\DeclareMathOperator*{\Image}{Im}
\newcommand{\proj}{\operatornamewithlimits{Proj}}
\newcommand{\R}{{\mathbb{R}}}
\newcommand{\Psdef}{{\bm{S}^n_0}}
\newcommand{\Pdef}{{\bm{S}^n_+}}
\newcommand{\hP}{{\hat{P}}}
\newcommand{\hKe}{{\hat{K}_{\textup{a}}}}
\newcommand{\tKe}{{\tilde{K}_\text{a}}}
\newcommand{\dtKe}{{\dot{\tilde{K}}_\text{a}}}
\newcommand{\dhKe}{{\dot{\hat{K}}_{\textup{a}}}}
\newcommand{\mapB}{\mathcal{B}}
\newcommand{\hTe}{{\hat{\theta}_{\text{a}}}}
\newcommand{\tTe}{{\tilde{\theta}_{\text{a}}}}
\newcommand{\dtTe}{{\dot{\tilde{\theta}}_{\text{a}}}}
\newcommand{\dhP}{{\dot{\hat{P}}}}
\newcommand{\hA}{{\hat{A}}}
\newcommand{\dhA}{{\dot{\hat{A}}}}
\newcommand{\tA}{{\tilde{A}}}
\newcommand{\dtA}{{\dot{\tilde{A}}}}
\newcommand{\de}{{\dot{e}}}
\newcommand{\hTa}{{\hat{\theta}_{\text{A}}}}
\newcommand{\dhTa}{{\dot{\hat{\theta}}_{\text{A}}}}
\newcommand{\tTa}{{\tilde{\theta}}_{\text{A}}}
\newcommand{\Lz}{\mathcal{L}}
\newcommand{\Z}{\mathcal{Z}}
\newcommand{\W}{\mathcal{W}}
\newcommand{\A}{{\mathcal{A}}}
\newcommand{\As}{{\mathcal{A}_s}}
\newcommand{\KAL}{\Theta}
\newcommand{\xm}{{x_{\text{m}}}}
\newcommand{\dxm}{{\dot{x}_{\text{m}}}}
\newcommand{\Pta}{{\mathcal{P}(\hA)}}
\newcommand{\Kta}{{K(\hA)}}
\newcommand{\Acl}{{A_{\text{cl}}(\hA)}}
\newcommand{\Ke}{{K_{\textup{a}}(\hA)}}
\newcommand{\dKe}{{\dot{K}_\text{a}}}
\newcommand{\dimx}{{n}} 
\newcommand{\dimu}{{m}}
\title{\Large MR-ARL: Model Reference Adaptive Reinforcement Learning for
	Robustly Stable On-Policy Data-Driven LQR}
\begin{document}

\author{Marco Borghesi, Alessandro Bosso, Giuseppe Notarstefano
	\thanks{M. Borghesi, A. Bosso, and G. Notarstefano are with the Department of Electrical, Electronic, and Information Engineering, Alma Mater Studiorum - Universit\`a di Bologna, 40136 Bologna, Italy. The corresponding author is M.~Borghesi (e-mail: m.borghesi@unibo.it).\\
	A preliminary version of this work is given in \cite{borghesi2023onpolicy}.}
}

\maketitle

\begin{abstract}
    This article introduces a novel framework for data-driven linear quadratic regulator (LQR) design.
    First, we introduce a reinforcement learning paradigm for on-policy data-driven LQR, where exploration and exploitation are simultaneously performed while guaranteeing robust stability of the whole closed-loop system encompassing the plant and the control/learning dynamics.
    Then, we propose Model Reference Adaptive Reinforcement Learning (MR-ARL), a control architecture integrating tools from reinforcement learning and model reference adaptive control.
    The approach stands on a variable reference model containing the currently identified value function.
    Then, an adaptive stabilizer is used to ensure convergence of the applied policy to the optimal one, convergence of the plant to the optimal reference model, and overall robust closed-loop stability.
    The proposed framework provides theoretical robustness certificates against real-world perturbations such as measurement noise, plant nonlinearities, or slowly varying parameters.
    The effectiveness of the proposed architecture is validated via realistic numerical simulations.
\end{abstract}

\begin{IEEEkeywords}
\noindent
Data-driven control, linear quadratic regulator, model reference adaptive control, optimal control, reinforcement learning.
\end{IEEEkeywords}

\section{Introduction}\label{sec:introduction}

	\IEEEPARstart{R}{einforcement Learning} (RL) is a machine learning field that emerged to perform optimization and decision-making by interacting with an environment without or with limited knowledge of its mathematical model \cite{kiumarsi2017optimal, recht2019tour}.
	Over the past years, this field has been successfully applied to multiple domains, including computer games, biology, and economics and finance.
	RL has garnered the attention of the control engineering community, where it has been used to address optimal control in uncertain or model-free scenarios.
	Learning from system data aligns RL with principles found in adaptive control literature \cite{annaswamy2021historical}, which seeks to design dynamic controllers for regulation and tracking in the presence of model uncertainties.
	This work systematically investigates the connection between the fields of optimal and adaptive control, paving the way for a new RL paradigm that provides formal certificates of \emph{robust closed-loop learning and control}, thereby leading to effective performance in real-world applications.
	
	In particular, we focus on solving the infinite-horizon linear quadratic regulator (LQR) problem by developing an \emph{on-policy data-driven algorithm} where data collection and optimization are done simultaneously by applying the learned policy to the actual system.
	The requirements of our framework are schematically presented below in Table I and later formalized in Section \ref{sec:problem_setup}.
	A key distinctive feature of our proposed framework is the requirement of robust asymptotic stability for the whole closed-loop system including both the learning and control dynamics.
	This requirement, as elucidated in the subsequent sections, encapsulates the notion that the proven learning features in nominal cases must persist in perturbed scenarios, encompassing disturbances, measurement noise, slowly varying parameters, and sample-and-hold implementations.
	With a priori guarantees of effective closed-loop controller implementation, our framework is particularly tailored for safety-critical applications, such as collaborative robotics and aircraft control.
	Motivated by the above discussion, we provide an overview of the literature pertaining to data-driven LQR, distinguishing between both so-called off-policy and on-policy approaches.
	Then, we recall model reference adaptive control, one of the main inspirations of the approach of this article.
	
		\begin{tcolorbox}[left=3pt,right=3pt,top=1pt,bottom=1pt, colframe=black, colback=white!10, coltitle=black, colbacktitle=white, title={TABLE I} \textbf{Robustly stable on-policy data-driven LQR}, label=tab:on-policy-dd-lqr]
		{\bfseries Plant:} 
		\begin{equation}\label{eq:plant_dynamics}
			\dot{x} = Ax + Bu,
		\end{equation}
		with state $x\in \R^\dimx$, input $u\in \R^\dimu$, and matrices $A$ and $B$.
		
		{\bfseries Infinite-horizon LQR:} find an optimal control policy $\pi^\star: \mathbb{R}^n \to \mathbb{R}^m$ such that $u(t) = \pi^\star(x(t))$ minimizes, for all initial conditions $x_0 \in \R^{\dimx}$, the following cost functional along the solutions of the plant:
		\begin{equation}\label{prob:cost_functional}
			J(x_0, u(\cdot))\!\coloneqq \!\!\int_0^\infty\!\!\! x(\tau)^\top \!Q x(\tau) + u(\tau)^\top\! R u(\tau) d\tau,
		\end{equation}
		with tuning matrices $Q = Q^\top \geq 0$, $R = R^\top > 0$.
		
		{\bfseries Problem:} with $A$ and $B$ partially or totally unknown, find a controller of the form
		\begin{equation}
			\begin{split}
				\dot{z} & = \varphi(x, z, d) \hspace{1.4cm}\text{\emph{learning dynamics}} \\
				u & = \pi(x, z, d) \hspace{1.4cm}\text{\emph{applied control policy}}
			\end{split}
		\end{equation}
		with $z \in \Z \subset \mathbb{R}^{\ell}$ and $d \in \mathbb{R}^q$ a dither signal, such that the following properties are achieved.
		\begin{enumerate}
			\item {\bfseries\emph{Exploration:}} $d$ probes the uncertainties of $A$ and $B$.
			\item {\bfseries\emph{Exploitation:}} map $x \mapsto \pi(x, z, 0)$ converges to the optimal policy $x \mapsto \pi^\star(x)$.
			\item {\bfseries\emph{Robust stability:}} learning is enforced through robust asymptotic stability of the closed-loop system.
		\end{enumerate}
	\end{tcolorbox}
	
	\subsubsection*{Off-Policy Data-Driven LQR}
	
		Off-policy approaches involve finding the optimal policy without applying it at the same time to the actual system.
		In this context, we find iterative methods, often inspired by the Kleinman algorithm, involving either parameter identification or direct estimate of the policy \cite{watkins1989learning, jiang2012computational, modares2016optimal, pang2018data, krauth2019finite, pang2021robust, lopez2023efficient}.
		Typically, in these methods, the stabilization of the controlled system during the evolution of the learning algorithm is circumvented by assuming an initial stabilizing policy.
		However, as discussed in~\cite{ziemann2022policy}, there are situations where this assumption may be unrealistic due to plant uncertainties. 
		The algorithm \cite{bian2016value} does not need this assumption.
		Finally, other relevant paradigms for off-policy approaches involve batch identification of the policy from data \cite{dorfler2022role, 9992708, de2019formulas, da2018data, de2021low, rotulo2020data} and system-level synthesis \cite{dean2019safely}.
		All these approaches differ from our setup, since the exploration and exploitation phases are not performed simultaneously.
	
	\subsubsection*{On-Policy Data-Driven LQR}
	
		As compared to off-policy approaches, the on-policy paradigm adds the significant challenge of ensuring stability of the interconnection between the plant and the control/learning algorithm.
		Early attempts to address this setup are \cite{bradtke1994adaptive,fazel2018global,vrabie2009adaptive,modares2014linear,rizvi2017output,rizvi2019reinforcement,kiumarsi2015optimal}, where the stabilizing feedback gain is updated at discrete iterations.
		However, the stability of the whole closed-loop system is not analyzed and an initial stabilizing policy is required, similar to off-policy approaches.
		A data-driven approach to compute the initial gain is presented in \cite{possieri2022q}.
		To the best of our knowledge, \cite{possieri2022value} is the only work in the literature that provides stability guarantees without a stabilizing initialization, although the focus is on the learning dynamics and not the overall closed-loop system.
	
	\subsubsection*{Model Reference Adaptive Control}
	
		We finally review the literature dealing with model reference adaptive control (MRAC).
		The principle of this technique is to match the unknown system dynamics to a reference model with desired properties \cite{tao2014multivariable, ioannou1996robust, narendra2012stable}.
		To ensure design feasibility, this stabilization technique requires the plant to satisfy constraints named \emph{matching conditions}.
		A recent work combining MRAC and RL is \cite{guha2021online}, where RL techniques are used to find the optimal controller for a reference model based on nominal plant parameters.
		Then, MRAC is applied to assign the reference model to the real system.
		However, convergence to the desired policy is not proved and can only be ensured to a policy that is optimal for the reference model and not the true system.
	
	\subsubsection*{Article Contribution}
	
		The goal of this article is to lay the foundations for a new paradigm of on-policy data-driven LQR according to the problem setting described in Table~I.
		The main paper contribution is twofold: {\textbf{(i)}} introducing a novel formulation of the on-policy data-driven LQR problem where centrality is given to the stability of the whole closed-loop learning and control system; {\textbf{(ii)}} providing a combined adaptive control and reinforcement learning design paradigm to address this framework.
		
		Concerning our first contribution, we formulate the on-policy data-driven LQR problem in terms of convergence of the controller, the plant, and an exosystem (modeling the dither signal) to an asymptotically stable set.
		The fundamental property defining this set is that the learned policy is optimal.
		Additionally, the set becomes smaller as the dither amplitude is reduced.
		Thanks to this formulation, we ensure that asymptotic stability in the nominal scenario is preserved, practically and semiglobally, also for a broad class of perturbations, see \cite[Ch. 7]{goebel2012hybrid}.
		With the generality of the proposed framework, we aim to provide a solid foundation for future work in the field.
		
		The second and main contribution of this work consists in introducing an on-policy learning and control law, termed \emph{Model Reference Adaptive Reinforcement Learning} (MR-ARL), integrating concepts from system identification, adaptive control, and reinforcement learning paradigms.
		The architecture is structured as a modular actor-critic system with a time-varying reference model bridging the two modules.
		The actor, inspired to a MRAC architecture, guides the plant to a desired behavior set by the reference model, even in the presence of parametric uncertainties.
		The reference model is updated online by the critic, which leverages system identification techniques to estimate the dynamics.
		To impose the desired learning properties, the reference model is driven by a dither signal for which we require suitable richness properties.
		By relying on analysis tools related to adaptive control, differential inclusions, and singular perturbations, we prove formally that our architecture achieves the following properties for the whole closed-loop system: (i) convergence of the policy to the optimal one; (ii) asymptotic estimation of the true system parameters; (iii) uniform asymptotic stability of an attractor (tunable with the dither amplitude); (iv) robustness in the sense of semiglobal practical asymptotic stability with respect to unmodeled nonlinearities and disturbances.
		To the best of authors' knowledge, in the context of on-policy data-driven LQR, this algorithm is the first one possessing all these properties.
		Also, no assumptions are needed about the initial policy.
		Further, persistency of excitation, needed to ensure convergence, is not assumed a priori, but rather guaranteed by design by resorting to concepts from nonlinear adaptive systems~\cite{panteley2001relaxed}.
		Finally, given the inherent robustness of the proposed design framework, we ensure that the algorithm is effective in the presence of slowly varying parameters.
		To validate this property, our numerical simulations cover both the constant parameters case and the one with drifts.
		
		In Section \ref{sec:problem_setup}, we provide preliminary concepts of LQR and introduce the on-policy data-driven paradigm that formalizes Table I.
		Then, Section \ref{sec:controller} presents MR-ARL and its derivation, while Section \ref{sec:main_result} provides its stability properties.
		The technical results for the stability analysis are given in Section \ref{sec:analysis}, while all related proofs are left in the Appendix.
		Finally, Section \ref{sec:example} provides in-depth numerical results.
	
	\subsubsection*{Notation}
		The set of real numbers is denoted by $\mathbb{R}$, while
		$\mathbb{R}_{\geq 0} \coloneqq [0, +\infty)$.  $(\cdot)^\top$
		and $(\cdot)^\dagger$ denote the transpose and the Moore-Penrose
		pseudo-inverse of real matrices, while $\Psdef$ (resp. $\Pdef$) denotes
		the set of $n \times n$ symmetric, positive semidefinite
		(resp. positive definite) real matrices. 
		$|\cdot|$ denotes the induced $2-$norm of real matrices, while $|\cdot|_F$ indicates the Frobenius norm.
		The notation $\dot{\xi} = f(t, \xi)$, $\xi \in C\subset\mathbb{R}^r$ represents a differential equation having flow set $C$, i.e., with initial state
		and flow constrained to $C$. We refer to
		\cite{goebel2012hybrid} for the solution and stability notions of
		(hybrid) dynamical systems, covering the constrained differential
		equations of this article.

\section{Preliminaries and Problem Setup}\label{sec:problem_setup}

	Following the discussion in the introduction, we now provide a rigorous formulation of the on-policy data-driven linear quadratic regulation (LQR) problem addressed in the paper.
	
	\subsection{Linear Quadratic Regulation}
	
		We start by introducing the basic concepts of LQR for system \eqref{eq:plant_dynamics} and the cost functional \eqref{prob:cost_functional}.
		The infinite-horizon LQR problem involves finding a \emph{control policy} $\pi^\star: \R^{\dimx} \to \R^{\dimu}$ such that applying $u(t) = \pi^\star(x(t))$ for all $t \in \mathbb{R}_{\geq 0}$ solves, for all initial conditions $x_0 \in \R^{\dimx}$, the following optimal control problem:
		\begin{equation}\label{eq:LQR}
			\begin{split}
				\min_{u(\cdot)} J(x_0, u(\cdot))&\coloneqq \int_0^\infty\!\! x(\tau)^\top\! Q x(\tau) + u(\tau)^\top\! R u(\tau) d\tau\\
				\text{subject to:} \;\;\;& \dot{x}(t) = Ax(t) + Bu(t), \;\;\; \forall t \in \mathbb{R}_{\geq 0},\\
				& x(0)=x_0.
			\end{split}
		\end{equation}
		Under the assumption that pair $(A, B)$ be stabilizable and $(\sqrt{Q}, A)$ be detectable, the LQR problem \eqref{eq:LQR} is solved by the linear policy:
		\begin{equation}\label{eq:optimal_policy}
			\pi^\star(x) \coloneqq K^\star x, \qquad K^\star \coloneqq -R^{-1}B^\top P^\star,
		\end{equation}
		where $P^\star \in \Psdef$ is the unique solution of the algebraic Riccati equation (ARE):
		\begin{equation}\label{eq:ARE}
			A^\top P^\star + P^\star A - P^\star BR^{-1}B^\top P^\star + Q = 0.
		\end{equation}
		Additionally, $P^\star \in \Pdef$ if pair $(\sqrt{Q}, A)$ is observable.
		We also recall that $P^\star$ specifies the \emph{value function}, which is defined, for a given initial condition $x$, as the minimum of $J(x, u(\cdot))$:
		\begin{equation}\label{eq:optimal_value_function}
			V^\star(x) \coloneqq \min_{u(\cdot)}J(x, u(\cdot)) \coloneqq x^\top P^\star x.
		\end{equation}
		Consider the differential Riccati equation (DRE) with flow set constrained to symmetric positive semidefinite matrices:
		\begin{equation}\label{eq:DRE}
			\dot{P} = A^\top P + P A - P BR^{-1}B^\top P + Q, \quad P \in \Psdef.
		\end{equation}
		Then, if $(A, B)$ is stabilizable and $(\sqrt{Q}, A)$ is detectable, the equilibrium point $P^\star$ is uniformly globally asymptotically stable (UGAS) for the constrained differential equation \eqref{eq:DRE} \cite[Thm. 15]{kuvcera1973review}.
		Furthermore, if $(A, B)$ is controllable and $(\sqrt{Q}, A)$ is observable, $P^\star$ is uniformly locally exponentially stable (ULES) \cite[Thm. 4]{bucy1967global}.
		Formal results describing the stability properties of DRE \eqref{eq:DRE} are provided in \cite{kuvcera1973review, bucy1967global}.
	
	\subsection{Robustly Stable On-Policy Data-Driven LQR}
	
		Solving the LQR problem \eqref{eq:LQR} involves computing the solution $P^\star$ of ARE \eqref{eq:optimal_policy}, which depends on the matrices $A$ and $B$ of plant \eqref{eq:plant_dynamics}.
		Therefore, if $A$ and $B$ are unknown or only partially known, it is necessary to resort to data-driven approaches based on acquiring the data of the state-input sequences $(x(t), u(t))$ (continuously or via batches).
		
		In this work, we are interested in finding an on-policy data-driven algorithm, where data collection and learning are performed simultaneously by applying the learned policy.
		We now provide a novel rigorous framework to formalize this problem so that its solution guarantees, by design, desirable learning and robust stability properties.
		
		As anticipated in the introduction, the class of controllers that we seek are described by continuous-time dynamical systems of the form
		\begin{equation}\label{eq:controller_generic}
			\begin{split}
				\dot{z} &= \varphi(x, z, d)\\
				u &= \pi(x, z, d)\\
			\end{split}\qquad z \in \Z,
		\end{equation}
		where $z$ is the controller state, $\Z \subset \mathbb{R}^{\ell}$ is a closed set, $d \in \mathbb{R}^{q}$ is a uniformly bounded signal, named hereafter \emph{dither}, while $\varphi$ and $\pi$ are maps that are locally Lipschitz in their arguments.
		To the algorithm \eqref{eq:controller_generic}, we associate the \emph{learning set} $\Lz$, defined as:
		\begin{equation}\label{eq:learning_set}
			\Lz \coloneqq \{z \in \Z: \pi(x, z, 0) = K^\star x,\; \forall x \in \mathbb{R}^n\},
		\end{equation}
		which denotes the set of all controller states such that the control policy $\pi$ coincides with the optimal policy $\pi^\star$ in \eqref{eq:optimal_policy} whenever the dither $d$ is turned off.
		
		In \eqref{eq:controller_generic}, $d$ is an exogenous signal that may include references for tracking, probing signals, or disturbances.
		To ensure well-posedness of the problem formulation, from now on we consider a general class of signals $d$ that can be generated by an autonomous dynamical system (\emph{exosystem}) of the form
		\begin{equation}\label{eq:exosystem}
			\begin{split}
				\dot{w} &= s(w)\\
				d &= h(w)
			\end{split}\qquad w \in \W,
		\end{equation}
		where $w$ is the exosystem state, $\W \subset \mathbb{R}^{p}$ is the set of admissible initial conditions $w(0)$, while $s$ and $h$ are locally Lipschitz maps.
		Since $d$ is a bounded signal defined for all $t \in \mathbb{R}_{\geq
			0}$, we suppose that the set $\W$ be compact and strongly forward invariant under the flow of \eqref{eq:exosystem}.
		\begin{remark}
			Exosystem \eqref{eq:exosystem} is not implemented in the control solution but is used to represent the class of admissible signals $d$.
			Moreover, the results of the paper hold if \eqref{eq:exosystem} is replaced by a well-posed hybrid dynamical system \cite{goebel2012hybrid} to include discontinuous dither signals.
			Here, we use a continuous-time exosystem to avoid an additional notational burden.
		\end{remark}
		
		The closed-loop system resulting from the interconnection of exosystem \eqref{eq:exosystem}, plant \eqref{eq:plant_dynamics}, and controller \eqref{eq:controller_generic} is given by
		\begin{equation}\label{eq:closed-loop_system}
			\begin{split}
				\dot{w} &= s(w)\\
				\dot{x} &= Ax + B\pi(x, z, h(w))\\
				\dot{z} &= \varphi(x, z, h(w))\\
			\end{split}\qquad (w, x, z) \in \W\times\mathbb{R}^n\times\Z.
		\end{equation}
		We are ready to precisely state the requirements for controller \eqref{eq:controller_generic}, which include a precise stability characterization for the closed-loop system \eqref{eq:closed-loop_system}.
		
		\begin{definition}\label{def:design_solution}
			We say that controller \eqref{eq:controller_generic} \emph{solves the robustly stable on-policy data-driven LQR problem} if the learning set $\Lz$ in \eqref{eq:learning_set} is non-empty and, for a chosen class of dither signals $d$ generated by exosystem \eqref{eq:exosystem}, there exists a compact attractor $\mathcal{A}$, satisfying
			\begin{equation}
				\mathcal{A} \subset \W\times\mathbb{R}^n\times\Lz,
			\end{equation}
			that is asymptotically stable for the closed-loop system \eqref{eq:closed-loop_system}.
		\end{definition}
		
		We show that any algorithm satisfying Definition \ref{def:design_solution} covers all of the design requirements stated in the introduction.
		\begin{itemize}
			\item {\textbf{Exploration}:} choosing the dither $d$ determines the shape and the attractivity properties of $\mathcal{A}$, thus it ensures the necessary probing to estimate the optimal policy.
			\item {\textbf{Exploitation:}} since the projection of $\mathcal{A}$ in the $z$ direction is a subset of the learning set $\Lz$, uniform attractivity of $\mathcal{A}$ (encoded in asymptotic stability) ensures $z \to \Lz$ and, thus, correct estimation of the optimal policy.
			\item {\textbf{Robust stability:}} under the regularity properties required for the controller and assumed for the exosystem, asymptotic stability of the attractor $\mathcal{A}$ is preserved (practically and semiglobally) under a broad range of non-vanishing perturbations arising in real-world scenarios, such as disturbances, measurement noise, unmodeled dynamics, sample-and-hold implementations of the controller, and actuator dynamics.
		\end{itemize}
		\begin{remark}
			In Definition \ref{def:design_solution}, we do not specify the restrictions on the knowledge of $A$ and $B$ to cover a broad range of applications and solutions.
			However, the prior knowledge on the parametric uncertainties determines the design of $\varphi$, $\pi$, and $\Z$.
			Note that, if controller \eqref{eq:controller_generic} is not appropriately chosen, the learning set may be empty.
		\end{remark}
		\begin{remark}
			The convergence of $z$ to the learning set $\Lz$ in Definition \ref{def:design_solution} implies that the controlled plant becomes asymptotically:
			\begin{equation}\label{eq:optimal_closed_loop}
				\dot{x} = (A + BK^\star)x + \Delta(x, z, d),
			\end{equation}
			where $\Delta(x, z, d) \coloneqq B(\pi(x, z, d) - K^\star x)$ vanishes in $d = 0$.
			Moreover, since $\Delta(x, z, d)$ is locally Lipschitz, it is uniformly bounded for all $(x, z, w)$ in the compact attractor $\mathcal{A}$.
			As a consequence, the input-to-state stability of \eqref{eq:optimal_closed_loop} implies that
			\begin{equation}
				\limsup_{t\to \infty} |x(t)| \leq \alpha(\limsup_{t\to\infty}|d(t)|),
			\end{equation}
			where $\alpha$ is a class $\mathcal{K}$ function.
			In other words, the ultimate bound of $x$ directly depends on the amplitude of the injected dither $d$.
		\end{remark}

\section{Model Reference Adaptive Reinforcement Learning}\label{sec:controller}
	\begin{figure}[t!]
		\centering
		\usetikzlibrary{shapes.symbols}

\begin{tikzpicture}[scale = 1.0]

	%\filldraw[fill = green!20] (0,0) rectangle (2,1);
	%\filldraw[fill = green!20] (6,-1) rectangle (7,0);
	%\filldraw[fill = green!20] (11,-1.5) rectangle (13,0.5);
	%\filldraw[fill = green!20] (8,-3) rectangle (10,-2);
	
	%\def \NodeDistX {5.5}
	%\def \NodeDistY {5.5}
	
	%\coordinate(D) at (0.40*\NodeDistX, 0.35*\NodeDistY);
	%\coordinate(B) at (0.65*\NodeDistX, 0.24*\NodeDistY);

	%% NEW PART
	\begin{scope}[xshift=0cm, yshift = 0cm]
		
		%MRARLO
		
		%\filldraw[fill=gray!1, rounded corners=1mm](-0.2,-0.2) rectangle (9.2,4.2);
		\filldraw[fill=gray!1, rounded corners=1mm](-0.2,-0.7) rectangle (9.2,3.7);
		%\filldraw[fill=fluorescentorange, rounded corners=1mm](-0.2,-0.2) rectangle (9.2,4.2);
		\node[anchor = center] at (4.3, -0.35) {\small Model Reference Adaptive \hspace{0.1cm}Reinforcement Learning};
		
		% ACTOR
		\filldraw[fill=gray!1, rounded corners=1mm](0,0) rectangle (4,3.5);
		%\filldraw[fill=fluorescentyellow, rounded corners=1mm](0,0) rectangle (4,3.5);
		\node[anchor = center] at (0.8, 0.4) {\small ACTOR};
		
		% CRITIC
		\filldraw[fill=gray!1, rounded corners=1mm](5,0) rectangle (9, 3.5);
		\node[anchor = center] at (8.2, 0.4) {\small CRITIC};
		
		%PLANT
		%\filldraw[fill=gray!10, rounded corners=1mm](2,-1.5) rectangle (7,-0.5);
		\filldraw[fill=gray!10, rounded corners=1mm](2,4.) rectangle (7,5.);
%		\node[anchor = center] at (4.5, -0.8) {PLANT};
%		\node[anchor = center] at (4.5, -1.2) {$x$};
		\node[anchor = center] at (4.5, 4.7) {\small Plant};
		\node[anchor = center] at (4.5, 4.3) {$x$};
		
		%REFERENCE MODEL
%		\filldraw[fill=gray!10, rounded corners=1mm](2.5, 2) rectangle (6.5,3.3);
%		\node[anchor = center] at (4.5, 3) { Reference model};
%		\node[anchor = center] at (4.5, 2.5) { $\hP$};
		\filldraw[fill=gray!10, rounded corners=1mm](2.5, 0.2) rectangle (6.5,1.5);
		\node[anchor = center] at (4.5, 1.2) {\small Reference model};
		\node[anchor = center] at (4.5, 0.6) {\small$\xm$};
		
		% OLDACTOR
%		\filldraw[fill=gray!10, rounded corners=1mm](0.2,0.2) rectangle (3.8,1.8);
%		\node[anchor = center] at (2, 1.4) {Value function identifier};
%		\node[anchor = center] at (2, 0.8) {$\xi, \zeta, \hA, \hP$};
		\filldraw[fill=gray!10, rounded corners=1mm](0.1,1.7) rectangle (3.9,3.3);
		\node[anchor = center] at (2, 2.9) {\small Adaptive stabilizer};
		\node[anchor = center] at (2, 2.3) {\small $\hKe$};

		% CRITIC
%		\filldraw[fill=gray!10, rounded corners=1mm](5.2,0.2) rectangle (8.8,1.8);
%		\node[anchor = center] at (7, 1.4) {Adaptive stabilizer};
%		\node[anchor = center] at (7, 0.8) {$\hKe$};
		\filldraw[fill=gray!10, rounded corners=1mm](5.1,1.7) rectangle (8.9,3.3);
		\node[anchor = center] at (7, 2.9) {\small Value\hspace{-1pt} function\hspace{-1pt} identifier};
		\node[anchor = center] at (7, 2.3) {\small $\xi, \zeta, \hA, \hP$};
		
		% ARROWS
		
%		\node[anchor = south] at (4.5, 0.6) {\footnotesize{$x$}};
%		\draw[->, thick] (4.5, -0.5) -- (4.5, 0.6) -- (4, 0.6);
%		\draw[->, thick] (4.5, -0.5) -- (4.5, 0.6) -- (5, 0.6);
		\node[anchor = west] at (4.5, 2.8) {\footnotesize{$x$}};
		\draw[->, thick] (4.5, 4.) -- (4.5, 2.6) -- (4, 2.6);
		\draw[->, thick] (4.5, 4.) -- (4.5, 2.6) -- (5, 2.6);
		
%		\node[anchor = north] at (1.8, 2.7) {\footnotesize{$\xm$}};
%		\draw[->, thick] (1, 1.8) -- (1, 2.7) -- (2.5, 2.7);
		\node[anchor = west] at (1., 1.2) {\footnotesize{$\xm, d$}};
		\draw[->, thick] (2.5, 0.8) -- (1, 0.8) -- (1, 1.7);
		
%		\node[anchor = west] at (8, 2.2) {\footnotesize{$\hP$}};
%		\draw[->, thick] (6.5, 2.7) -- (8, 2.7) -- (8, 1.8);
		\node[anchor = west] at (8, 1.2) {\footnotesize{$\hA, \hP$}};
		\draw[->, thick] (8, 1.7) -- (8, 0.8) -- (6.5, 0.8);
		
%		\node[anchor = south] at (1.7, -1) {\footnotesize{$u$}};
%		\draw[->, thick, dashed] (2, -1) -- (1 , -1) -- (1, 0);
		\node[anchor = south west] at (1., 4.5) {\footnotesize{$u$}};
		\draw[->, thick] (1, 3.5) -- (1 , 4.5) -- (2, 4.5);

		\draw[<->, thick, dashed]  (4 , 2.) -- (5, 2.);

		% EXOSYSTEM
%		\node[cloud,
%		draw =black,
%		text=black,
%		fill = gray!10,
%		aspect=2] (c) at (4.5,5.5) {Exosystem};
%		\node at (4.5,5.1) {\footnotesize{$w$}};
%		\draw[->, thick, dashed]  (4.5 , 4.6) -- (4.5, 3.3);
%		\node[anchor = west] at (4.5, 4.45) {$d$};
		\node[cloud,
		draw =black,
		text=black,
		fill = gray!10,
		aspect=2] (c) at (4.5,-2.2){\small Exosystem};
		\node at (4.5,-2.6) {\small{$w$}};
		\draw[->, thick, dashed]  (4.5 , -1.2) -- (4.5, 0.2);
		\node[anchor = west] at (4.5, -1) {\small $d$};

	\end{scope}

\end{tikzpicture}
		\caption{Block scheme of the \emph{Model Reference Adaptive Reinforcement Learning}.}
		\label{fig:block_scheme} 
	\end{figure}
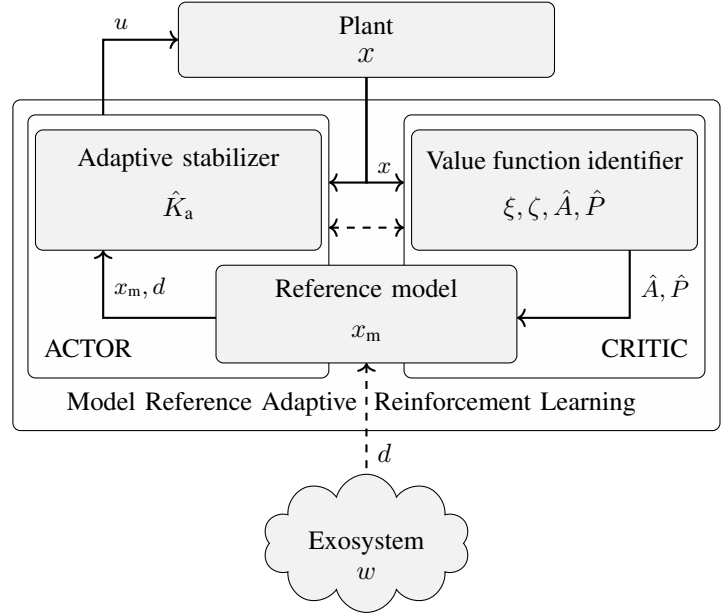
	We now present a new control and learning approach, named \emph{Model Reference Adaptive Reinforcement Learning} (MR-ARL), which satisfies Definition \ref{def:design_solution} in a scenario of \emph{structured
		uncertainties} characterized by the following assumptions.
	\begin{assumption}\label{as:ctrl_det}
		There exists a known closed convex set $\mathcal{C} \subset \R^{\dimx\times\dimx}$ such that:
		\begin{enumerate}
			\item $\mathcal{C}$ has a non-empty interior and $A \in \Int(\mathcal{C})$;
			\item $(\hA, B)$ is controllable and $(\sqrt{Q}, \hA)$ is observable for all $\hA \in \mathcal{C}$.
		\end{enumerate}
	\end{assumption}
	\begin{remark}
		From \cite{possieri2022value, menini2020algebraic}, it is known that if there exists $A_0$ such that $(A_0, B)$ is controllable and $(\sqrt{Q}, A_0)$ is observable, then there exists a scalar $\rho > 0$ such that $(A, B)$ is controllable and $(\sqrt{Q}, A)$ is observable for all $A$ such that $|A - A_0|\leq \rho$.
		Therefore, during implementation, the set $\mathcal{C}$ can be chosen as a ball centered in a nominal value $A_0$ of $A$ with radius $\rho$ chosen to include all possible uncertainties.
	\end{remark}
	\begin{assumption}\label{as:matching}
		Consider the linear map $\mapB:K \in \R^{\dimu\times \dimx} \longmapsto BK \in \R^{\dimx \times \dimx}$,
		where $B$ is the input matrix in \eqref{eq:plant_dynamics}.
		For some known $A_0\in \mathcal{C}$, it holds that
		\begin{equation}\label{eq:new_matching_condition_existence}
			A_0 - A \in \Image (\mapB). 
		\end{equation}
	\end{assumption}
	\begin{remark}\label{rem:M}
		Assumption \ref{as:matching} is an alternative formulation of the matching conditions used in the MRAC literature \cite{tao2014multivariable}. 
		Notice that for any $\hA$ such that $\hA-A_0 \in \Image(\mapB)$, by $\hA - A_0 +(A_0 - A) = \hA - A \in \Image(\mapB)$, there exists $K_{\textup{a}} \in \R^{\dimu\times \dimx}$ such that 
		\begin{equation}\label{eq:matching_condition_existence}
			\hA-A = B K_{\textup{a}}.
		\end{equation}
		Given the two sets $\mathcal{C}$ and $A_0+\Image(\mapB)$, we are interested in all matrices $\hA \in \KAL$, where 
		\begin{equation}\label{eq:Theta_definition}
			\KAL \coloneqq \mathcal{C}\cap(A_0+\Image(\mapB))
		\end{equation}
		since they can be used to build reference models. 
	\end{remark}
	
	Our controller is conceived as an \emph{actor-critic} modular architecture where a \emph{reference model} bridges the two parts of the design.
	The resulting structure is a MRAC where the reference model is continuously updated with value iteration, thus we aptly name it \emph{Model Reference Adaptive Reinforcement Learning}.
	We introduce the building blocks of the design.
	
	\begin{itemize}
		\item {\textbf{Critic:}} this block performs \emph{data-driven value function identification} to build an optimal and asymptotically stable reference model.
		In particular, a gradient identifier computes an estimate $\hA \in \KAL$ of $A$ that is used to obtain an estimate $\hP$ of the solution $P^\star$ of ARE~\eqref{eq:ARE}.
		In this respect, Assumption \ref{as:ctrl_det} guarantees that for any estimate $\hA$ the computation of $\hP$ is feasible.
		Then, $\hA$ and the optimal gain estimate $-R^{-1} B^{\top} \hP$ are used to build a reference model having state matrix $\hA - BR^{-1} B^{\top} \hP$.
		As input to the reference model, we consider a dither $d$ with sufficient richness properties to ensure convergence to the true system parameters and to the optimal policy.
		\item {\textbf{Actor:}} this block assigns the input to the plant to \emph{adaptively track the reference model}.
		During the transient, the feedback gain $-R^{-1} B^{\top} \hP$ may be not stabilizing for the real system. 
		For this reason, the actor introduces in the control law an additional adaptive feedback gain $\hKe$ to cancel the mismatch between the estimated matrix $\hA$ and the real $A$.
		Canceling such a mismatch is possible due to Assumption \ref{as:matching}.
	\end{itemize}
	
	See Fig.~\ref{fig:block_scheme} for a block scheme of Model Reference Adaptive  Reinforcement Learning.
	The full description of the design is presented in Algorithm~\ref{alg:MRARL} and discussed in detail in the next subsections.
	
	\begin{algorithm}[t!]
		\caption{MR-ARL}\label{alg:MRARL}
		\begin{algorithmic}
			\State \hspace{-0.47cm} \textbf{Initialization:} 
			\State $\hP(0) \in \Psdef$, $\hA(0) \in \KAL$, with $\KAL$ from Remark \ref{rem:M}
			\State $\lambda, \gamma, \nu, g, \mu > 0$ design gains
			\State $d(t)$: bounded and stationary signal with each entry sufficiently rich of order $n+1$ and uncorrelated
			\State \hspace{-0.47cm} \textbf{Repeat:} 
			\State \emph{Swapping filters:} 
			\begin{equation}\label{eq:swapping}
				\begin{split}
					\dot{\xi} &= - \lambda \xi + x, \qquad \dot{\zeta} = - \lambda(x + \zeta) - Bu\\
				\end{split}
			\end{equation}
			\State \emph{Identifier dynamics:}
			\begin{equation}\label{eq:identifier}
				\begin{split}
					&\dhA = \proj_{\hA \in \mathcal{C}}\!
					\left\{\!- \gamma BB^\dagger \frac{\epsilon \xi^\top}{1 + \nu |\xi||\epsilon|}\!\right\}\!\!, \quad \epsilon \coloneqq \hA \xi \! - \! (x + \zeta)
				\end{split}
			\end{equation}
			\State \emph{Value iteration:}
			\begin{equation}\label{eq:DRE_implementation}
				\dhP \!= g \!\left(\!\hA^\top\! \hP + \hP\hA - \hP B R^{-1}\! B^\top\! \hP + Q\right)
			\end{equation}
			\State \emph{Reference model:}
			\begin{equation}\label{eq:model_reference}
				\dxm = (\hA - BR^{-1} B^{\top} \hP) \xm + B d
			\end{equation}
			\State \emph{Adaptive gain dynamics:} 
			\begin{equation}\label{eq:adaptive_dynamics}
				\dhKe = -\mu B^\top \hP (x-\xm) x^\top + B^\dagger \dhA
			\end{equation}
			\State \emph{System input:}
			\begin{equation}\label{eq:system_input}
				u= - R^{-1} B^{\top} \hP x + \hKe x +d
			\end{equation}
		\end{algorithmic}
	\end{algorithm}
	
	\subsection{Critic: Value Function Identifier}
		In this subsection, we build a continuous-time identifier of $P^\star$ based on the estimation of matrix $A$. 
		Given the structure of system \eqref{eq:plant_dynamics}, we compute an estimate $\hA \in \KAL$ of $A$ by designing a swapping filter of the form~\eqref{eq:swapping}, with $\lambda > 0$ a scalar gain for tuning the filter time constant.
		Using the filter states, define the \emph{prediction error}
		\begin{equation}\label{eq:prediction_error}
			\epsilon \coloneqq \hA \xi- (x + \zeta),
		\end{equation}
		which we can rewrite as $\epsilon = (\hA-A)\xi + \tilde{\epsilon}$, where
		\begin{equation}\label{eq:tilde_eps}
			\tilde{\epsilon} \coloneqq A\xi - (x + \zeta),
		\end{equation}
		is an error signal that is shown in Section \ref{sec:analysis} to converge exponentially to zero.
		Since the available signal $\epsilon$ converges exponentially to $(\hA - A)\xi$, which contains the parameter estimation error $\hA - A$, we can use the normalized projected gradient descent algorithm \eqref{eq:identifier} to update the estimate $\hA$.
		In \eqref{eq:identifier}, parameters $\gamma > 0$ and $\nu > 0$ are scalar gains, while
		the multiplicative term $BB^\dagger$ is a projection onto $\Image(\mapB)$ \cite[Sec. 5.5.4]{golub2013matrix}.
		This projection is needed to ensure, given any initialization $\hat{A}(0)\in A_0+\Image(\mapB)$, that the estimate $\hA(t)$ never leaves this subspace. 
		Finally, $\proj_{\hA \in \mathcal{C}}\{\cdot\}$ is a Lipschitz continuous parameter projection operator, whose expression is provided, e.g., in \cite[Appendix E]{krstic1995nonlinear} and depends on the shape of the set $\mathcal{C}$.
		
		Given the estimate $\hA$, we are interested in computing the matrix $\hP \in \Pdef$ that solves the ARE
		\begin{equation}\label{eq:ARE_implementation}
			\mathcal{R}(P, \hA) \coloneqq \hA^\top P + P\hA - P B R^{-1} B^\top P + Q=0.
		\end{equation}
		From \cite{borghesi2023onpolicy}, such a matrix could be obtained by computing the map $\mathcal{P}(\hA)$ that solves \eqref{eq:ARE_implementation} for each $\hA$, i.e., such that:
		\begin{equation}\label{eq:ARE_static}
			\mathcal{R}(\mathcal{P}(\hA), \hA) = 0, \quad \text{for all } \hA \in \KAL.
		\end{equation}
		For simplicity in the implementation and inspired by \cite{possieri2022value}, in Algorithm~\ref{alg:MRARL}, we compute $\hP$ via the dynamical system \eqref{eq:DRE_implementation}, which is a DRE rescaled by the tuning gain $g > 0$.
		Notice that, if $\hA$ is constant, then the solution $\hP$ of \eqref{eq:DRE_implementation} converges to $\mathcal{P}(\hA)$.
		
		\begin{remark}
			Assumption \ref{as:ctrl_det} guarantees that, for each $\hA \in \KAL$, $\mathcal{P}(\hA)$ exists, is unique, and positive definite for any $\hA \in \KAL$.
			Although stabilizability of $(\hA, B)$ would be sufficient in Assumption \ref{as:ctrl_det} for the solvability of $\mathcal{R}(P, \hA) = 0$, controllability is essential to guarantee convergence of the identifier under sufficient richness of the dither $d$.
			Moreover, we need observability instead of simple detectability to ensure that $\mathcal{P}(\hA)$ is positive definite. 
		\end{remark}
		\begin{remark}\label{remark:cl_stability}
			From Assumption \ref{as:ctrl_det} and the parameter projection in \eqref{eq:identifier}, matrix $\hA - BR^{-1} B^{\top} \mathcal{P}(\hA)$ is Hurwitz by design.
			Therefore, if $\hA$ converges to a constant matrix, \eqref{eq:DRE_implementation} ensures that $\hA - BR^{-1} B^{\top} \hP$ converges to a Hurwitz matrix.
		\end{remark}
	
	\subsection{Reference Model}
		Given the estimate $\hP$ of $P^\star$, we design a reference model for system \eqref{eq:plant_dynamics}.
		The reference model has to embed all the properties required for the plant, i.e., robust stability, optimality and persistency of excitation.
		To these aims, consider system \eqref{eq:model_reference}, where $\xm \in \R^\dimx$ is the reference model state.
		We embed the stability and optimality properties through $\hA - BR^{-1} B^{\top} \hP$, which is designed to converge to a Hurwitz matrix.
		\begin{remark}
			Different from classic MRAC, the state matrix $\hA - BR^{-1} B^{\top} \hP$ of the reference model \eqref{eq:model_reference} is not constant but time-varying as it depends on the estimates $\hA$ and $\hP$.
			This property leads to an adaptive design where the known-plant stabilizing gains are time-varying.
		\end{remark}
		Finally, we embed the persistency of excitation properties through dither $d \in \R^\dimu$, which is chosen such that it is a bounded stationary signal, whose entries are sufficiently rich of order $n+1$ and uncorrelated. 
		\begin{remark}
			We model $d(t)$ as the output of an exosystem of the form \eqref{eq:exosystem}.
			It is not necessary to actually implement the exosystem as part of the algorithm, as we show in the numerical example. 
		\end{remark}
	
	\subsection{Actor: Model Reference Adaptive Controller}
		Given the reference model \eqref{eq:model_reference}, we design an adaptive controller for system \eqref{eq:plant_dynamics}.
		Define the tracking error $e\coloneqq x - \xm$ and compute its time derivative from \eqref{eq:plant_dynamics}, \eqref{eq:model_reference} as
		\begin{equation}\label{eq:error_dynamics}
			\begin{split}
				\dot{e} =&\; Ax + B u - (\hA - BR^{-1} B^{\top} \hP)(x - e) - B d\\
				=&\; (\hA - BR^{-1} B^{\top} \hP) e + (A- \hA)x + B(u + R^{-1} B^{\top} \hP x - d).
			\end{split}
		\end{equation}
		To ensure that the plant \eqref{eq:plant_dynamics} asymptotically copies the behavior of the reference model \eqref{eq:model_reference}, i.e., $e(t) \to 0$, we exploit the fact that $\hA \in \KAL$.
		In particular, from \eqref{eq:matching_condition_existence}, for each $\hA$ there exists $\Ke\in \R^{\dimu\times\dimx}$ such that
		\begin{equation}\label{eq:matching_condition}
			\hA - A = B \Ke. 
		\end{equation}
		More specifically, we can ensure that map $\Ke$ is smooth in $\hA$ by choosing:
		\begin{equation}\label{eq:explicit_Ke}
			\Ke \coloneqq  B^\dagger (\hA-A),
		\end{equation}
		where $B^\dagger$ denotes the Moore-Penrose pseudoinverse of $B$.
		This way, \eqref{eq:error_dynamics} becomes
		\begin{equation}\label{eq:error_dynamics_2}
			\begin{split}
				\dot{e} =&\; (\hA - BR^{-1} B^{\top} \hP) e + B(u + R^{-1} B^{\top} \hP x -\Ke x - d),
			\end{split}
		\end{equation}
		suggesting a control law of the form 
		\begin{equation}
			\begin{split}
				u &\coloneqq -R^{-1} B^{\top} \hP x + \Ke x +d
			\end{split}
		\end{equation}
		if the plant dynamics were known.
		However, $\Ke$ is unavailable for design as it depends also on $A$, as highlighted in \eqref{eq:matching_condition}, thus we consider the certainty-equivalence-based adaptive controller given in~\eqref{eq:system_input}, where $\Ke$ is replaced by the adaptive gain $\hKe$, driven by the adaptive law~\eqref{eq:adaptive_dynamics} where $\mu > 0$ is a scalar gain. 
		The first term in the adaptive law~\eqref{eq:adaptive_dynamics} is a standard update to ensure the error $e$ goes asymptotically to zero in a framework where the model mismatch is constant.
		However, since $\hA$ is continuously updated by identifier~\eqref{eq:identifier}, the second term in the update law takes into account the time-varying mismatch.
	
	\section{Main Result}\label{sec:main_result}
		We now provide the main results of this work, where we show that Model Reference Adaptive Reinforcement Learning solves the robustly stable on-policy data-driven LQR problem as per Definition \ref{def:design_solution}.
		The first result is given supposing to have a DRE dynamics in \eqref{eq:DRE_implementation} infinitely faster than the rest of the system (\emph{reduced-order system}), i.e., supposing $\hP(t)=\mathcal{P}(A(t))$ as in \eqref{eq:ARE_static} at each $t$.
		For this reason, we mark the results for the reduced-order system with a subscript $s$ to highlight its slow dynamics.
		We then follow a singular perturbation approach to prove also Algorithm \ref{alg:MRARL} solves the robustly stable on-policy data-driven LQR problem.
		
	\subsection{Stability Result for the Reduced-Order System}
		Consider the Model Reference Adaptive Reinforcement Learning with ARE implementation of $\hP$ as in \eqref{eq:ARE_static}.
		Following the notation of Section \ref{sec:problem_setup}, the controller obtained by combining the value function identifier \eqref{eq:swapping}, \eqref{eq:identifier}, \eqref{eq:ARE_static}, reference model \eqref{eq:model_reference}, and the adaptive stabilizer \eqref{eq:adaptive_dynamics}, \eqref{eq:system_input} is in the form \eqref{eq:controller_generic}, with state
		\begin{equation}\label{eq:z_s}
			z_{\textup{s}} \coloneqq (\xi, \zeta, \hat{A}, x_{\text{m}}, \hKe) \in \mathcal{Z}_{\textup{s}} \coloneqq \mathbb{R}^{2\dimx}\times\Theta\times\mathbb{R}^{\dimx}\times\mathbb{R}^{\dimu\times\dimx},
		\end{equation}
		and output policy
		\begin{equation}
			\pi(x, z_{\textup{s}}, d) \coloneqq (\hKe - BR^{-1} B^{\top} \mathcal{P}(\hA)) x +d,
		\end{equation}
		from which it follows that the learning set $\mathcal{L}$ in \eqref{eq:learning_set} is non-empty and given by
		\begin{equation}\label{eq:L_s}
			\mathcal{L}_{\textup{s}} \coloneqq \{z_{\textup{s}} \in \mathcal{Z}_{\textup{s}}: \hKe - BR^{-1} B^{\top} \mathcal{P}(\hA) = - BR^{-1} B^{\top} P^\star\}.
		\end{equation}
		In particular, we recall that our algorithm aims at reaching the learning set by ensuring $\hA \to A$ (hence $\hP \equiv \mathcal{P}(\hA) \to P^\star$ by continuity of the map $\mathcal{P}(\hA)$) and $\hKe \to 0$.
		The following result shows that, with $\gamma > 0$ sufficiently small, the Model Reference Adaptive Reinforcement Learning with ARE implementation of $\hP$ solves the robustly stable on-policy data-driven LQR problem.
		\begin{theorem}\label{theorem_ARE}
			Consider the closed-loop system given by the interconnection of plant \eqref{eq:plant_dynamics} and the controller of Algorithm~\ref{alg:MRARL}, with $\hP(t)=\mathcal{P}(\hA(t))$ for all $t$ and $\Pta$ satisfying \eqref{eq:ARE_static}.
			Let the stationary dither $d$ be generated by an exosystem of the form \eqref{eq:exosystem} and let its entries be sufficiently rich of order $\dimx + 1$ and uncorrelated.
			Then, there exists $\gamma^\star > 0$ such that, for all $\gamma \in (0, \gamma^\star]$, there exists a compact set $\mathcal{A}_s$ satisfying
			\begin{equation}\label{eq:omega_set_of_reduced}
				\begin{split}
					\mathcal{A}_{\textup{s}} \subset \{(w, x, z_{\textup{s}})\in &\;\W\times\mathbb{R}^{\dimx}\times\mathcal{L}_{\textup{s}}: \hA = A, \hKe = 0, x = x_{\textup{m}}, \epsilon=0\}
				\end{split}
			\end{equation}
			that is uniformly globally asymptotically stable.
		\end{theorem}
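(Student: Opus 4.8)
The plan is to build a Lyapunov/reduction argument in three layers, exploiting the cascade structure visible in the closed-loop dynamics. First I would change coordinates to the error-type variables that appear in the candidate attractor: $\tA \coloneqq \hA - A$, $e \coloneqq x - \xm$, $\tKe \coloneqq \hKe - \Ke$ (recall $\Ke = B^\dagger\tA$ from \eqref{eq:explicit_Ke}), the filter error $\tilde\epsilon$ from \eqref{eq:tilde_eps}, and keep $w$ on its compact invariant set $\W$. In these coordinates the set $\mathcal{A}_s$ is (a subset of) $\{\tA = 0,\ \tKe = 0,\ e = 0,\ \tilde\epsilon = 0\}$, and the claim becomes UGAS of this set for the transformed dynamics. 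The key structural facts I would establish up front: (i) $\dot{\tilde\epsilon} = -\lambda\tilde\epsilon$, so $\tilde\epsilon \to 0$ exponentially independently of everything else; (ii) along $\tilde\epsilon = 0$ one has $\epsilon = \tA\xi$, so the identifier \eqref{eq:identifier} is a normalized projected gradient flow driven by the regressor $\xi$; (iii) substituting $u$ from \eqref{eq:system_input} into \eqref{eq:error_dynamics_2} and using \eqref{eq:matching_condition}, the tracking error obeys $\dot e = (\hA - BR^{-1}B^\top\hP)e + B\tKe x$, with $\dot{\tKe} = -\mu B^\top\hP e\, x^\top$ after the $B^\dagger\dhA$ terms cancel the $\dot\Ke$ contribution (this cancellation is the whole point of the second term in \eqref{eq:adaptive_dynamics}).

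Second, I would treat the $(e,\tKe)$ subsystem as a standard MRAC error system driven by the "slow" signals $(\hA,\hP)$: with $\hP = \Pta$ and Remark \ref{remark:cl_stability} guaranteeing $\hA - BR^{-1}B^\top\Pta$ Hurwitz uniformly over $\KAL$ (by the projection and Assumption \ref{as:ctrl_det}), the quadratic form $e^\top \Pta e + \tfrac{1}{\mu}|\tKe|_F^2$ is a natural Lyapunov candidate; its derivative, modulo the time-variation of $\Pta(\hA(t))$ and the $x$-dependence, is negative semidefinite in $e$, and one invokes a standard adaptive-control argument (boundedness, then a persistency/Barbalat or set-stability argument) to get $e\to 0$. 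Crucially I would need this to be uniform and to be an ISS-type estimate with respect to the rate $\dot{\hA}$, since $\hA$ is not constant — this is where the smallness of $\gamma$ enters: $|\dot{\hA}| = O(\gamma)$ from \eqref{eq:identifier}, so the $(e,\tKe)$ system sees a slowly varying Hurwitz matrix and an $O(\gamma)$ perturbation, and one gets practical-then-asymptotic stability. Third and separately, the identifier subsystem: on the invariant manifold $\tilde\epsilon=0$, $x = \xm$ (i.e. $e=0$), and $\hP=\Pta$, the reference model \eqref{eq:model_reference} is a stable linear filter driven by the sufficiently rich dither $d$; standard adaptive-systems theory (e.g. the relaxed PE framework of \cite{panteley2001relaxed}) then yields that the regressor $\xi$ — filtered $\xm$ — is persistently exciting, which forces the projected gradient flow \eqref{eq:identifier} to drive $\tA \to 0$ exponentially. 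Since $\hP \equiv \Pta$ is a continuous (indeed smooth) function of $\hA$ on the compact set $\KAL$, $\tA\to 0$ gives $\hP\to P^\star$, hence $\hKe\to 0$ and $z_s\to\mathcal{L}_s$.

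The assembly is a cascade/time-scale argument: $\tilde\epsilon$ decays exponentially on its own; feeding its decay into the $(e,\tKe)$ and identifier blocks, one shows these form an interconnection where the "fast-ish" tracking error $e$ is driven to a neighborhood of zero, the dither then excites the (asymptotically correct) regressor, $\tA$ is driven to zero, which in turn shrinks everything consistently. I would formalize this either via a reduction theorem for asymptotically stable sets (Teel-type, appropriate to the constrained-differential-equation setting of \cite{goebel2012hybrid}), or via a composite Lyapunov function $V = \beta_1\, e^\top\Pta e + \beta_1\mu^{-1}|\tKe|_F^2 + \beta_2\, \mathrm{tr}(\tA^\top\tA) + \beta_3\, |\tilde\epsilon|^2$ with $\gamma$ chosen small enough that the indefinite cross-terms (coming from $x = e + \xm$ appearing quadratically, and from $\dot\Pta$) are dominated. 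The main obstacle I anticipate is precisely establishing the persistency of excitation of $\xi$ \emph{uniformly} along trajectories that have not yet converged — i.e. decoupling "PE of the regressor" from "convergence of the estimates" — together with handling the nonlinear, time-varying, $x$-dependent coupling $B\tKe x$ in the tracking error while $\hP$ is still moving; it is for exactly this reason that the theorem restricts to small $\gamma$, and the proof must make the slow-time-scale (frozen-$\hA$) intuition rigorous via averaging/singular-perturbation or slowly-varying-systems estimates. The remaining robustness/UGAS packaging then follows from the regularity hypotheses and the general results cited in the paper.
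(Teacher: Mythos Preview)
Your proposal is correct and follows essentially the same route as the paper: the same error coordinates $(\tilde\epsilon,\tA,e,\tKe)$, the same Lyapunov candidates $e^\top\Pta e + \mu^{-1}|\tKe|_F^2$ for the MRAC block and $|\tTa|^2$ for the identifier, smallness of $\gamma$ to bound $|\dhA|$ and control the time-variation of $\Pta$, and a PE argument for parameter convergence. The paper resolves your ``main obstacle'' not by restricting to the manifold $e=0$ but by establishing PE of $\xm$ \emph{directly} for all $\gamma$ small enough via a slowly-varying-systems result (the reference model depends only on $\hA$ and $d$, so Mareels' theorem applies once $|\dhA|\leq\gamma$), then lifting this to u-PE of $x=\xm+e$ via \cite{panteley2001relaxed} to get UGAS/ULES of $(e,\tKe)$, and finally to PE of $\xi$ for the identifier; the final packaging is via uniform attractivity of an explicit compact set and the $\omega$-limit machinery of \cite[Cor.~7.7]{goebel2012hybrid}, rather than a composite Lyapunov function.
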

	
	\subsection{Stability Result for MR-ARL}
		Consider the Model Reference Adaptive Reinforcement Learning (MR-ARL) algorithm with DRE implementation of $\hP$ as in \eqref{eq:DRE_implementation} (Algorithm \ref{alg:MRARL}).
		Following the notation of Section \ref{sec:problem_setup}, the controller obtained by combining the value function identifier \eqref{eq:swapping}, \eqref{eq:identifier}, \eqref{eq:DRE_implementation}, reference model \eqref{eq:model_reference}, and the adaptive stabilizer \eqref{eq:adaptive_dynamics}, \eqref{eq:system_input} is in the form \eqref{eq:controller_generic}, with state
		\begin{equation}
			z \coloneqq (z_{\textup{s}}, \hP) \in \mathcal{Z} \coloneqq \mathcal{Z}_{\textup{s}}\times \Psdef,
		\end{equation}
		where $z_{\textup{s}}$ and $\mathcal{Z}_{\textup{s}}$ are given in \eqref{eq:z_s}.
		The output policy then becomes
		\begin{equation}
			\pi(x, z, d) \coloneqq (\hKe - BR^{-1} B^{\top} \hP) x +d,
		\end{equation}
		and the learning set $\mathcal{L}$ is given by
		\begin{equation}\label{eq:learning_set_mrarl}
			\Lz\coloneqq \{z \in \mathcal{Z}: \hKe - BR^{-1} B^{\top} \hP = - BR^{-1} B^{\top} P^\star\}.
		\end{equation}
		In this case, the learning set is reached via $\hA \to A$ (hence $\hP \to P^\star$ by asymptotic stability of the DRE \eqref{eq:DRE}) and $\hKe \to 0$.
		The next result, which is the main result of this work, shows that with $\gamma > 0$ sufficiently small and $g > 0$ sufficiently large, the Model Reference Adaptive Reinforcement Learning in Algorithm \ref{alg:MRARL} solves the robustly stable on-policy data-driven LQR problem.
		
		\begin{theorem}\label{theorem_DRE}
			Consider the closed-loop system given by the interconnection of plant \eqref{eq:plant_dynamics} and the controller of Algorithm \ref{alg:MRARL}.
			Pick $d$ and $\gamma$ as in Theorem \ref{theorem_ARE}. 
			Then, the compact set 
			\begin{equation}
				\begin{split}
					\mathcal{A}\!\coloneqq\! \mathcal{A}_{\textup{s}}\times P^\star\! &\subset  \{(w, x, z)\in \W\times\mathbb{R}^{\dimx}\times\mathcal{L}:\hA = A, \hKe = 0, x = x_{\textup{m}}, \epsilon=0, P = P^\star\}
				\end{split}
			\end{equation}
			is semiglobally uniformly asymptotically stable in the tuning parameter $g>0$, where $\mathcal{A}_{\textup{s}}$ is given in \eqref{eq:omega_set_of_reduced} and $g$ is the one in \eqref{eq:DRE_implementation}.
			Namely, for any compact set $\mathcal{K} \subset \W\times\mathbb{R}^{\dimx}\times\mathcal{Z}$ of initial conditions for the closed-loop system, there exists $g > 0$ such that $\mathcal{A}$ is uniformly asymptotically stable with domain of attraction containing $\mathcal{K}$.
		\end{theorem}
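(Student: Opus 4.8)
The plan is to treat the DRE state $\hP$ as a fast variable and view Algorithm \ref{alg:MRARL} as a singular perturbation of the reduced-order system analyzed in Theorem \ref{theorem_ARE}. First I would rescale time (or equivalently introduce the boundary-layer time $\tau = g t$) so that the closed-loop dynamics split into the slow part $(w, x, z_{\textup{s}})$ — evolving according to the same vector field as in Theorem \ref{theorem_ARE} up to the error $\hP - \mathcal{P}(\hA)$ — and the fast part $\hP$, whose boundary-layer dynamics are
\begin{equation}
\frac{d\hP}{d\tau} = \hA^\top \hP + \hP\hA - \hP BR^{-1}B^\top \hP + Q,\qquad \hP \in \Psdef,
\end{equation}
with $\hA$ frozen. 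By Assumption \ref{as:ctrl_det}, for every frozen $\hA \in \KAL$ the pair $(\hA,B)$ is controllable and $(\sqrt{Q},\hA)$ observable, so by the cited results on the constrained DRE \eqref{eq:DRE} the equilibrium $\mathcal{P}(\hA)$ is UGAS and ULES for the boundary-layer system, uniformly in $\hA$ over the compact set $\KAL$. This furnishes the quasi-steady-state manifold $\hP = \mathcal{P}(\hA)$ on which the reduced dynamics coincide with those of Theorem \ref{theorem_ARE}, whose attractor $\mathcal{A}_{\textup{s}}$ is UGAS.

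Next I would invoke a standard singular perturbation / two-time-scale stability theorem for the interconnection of the (globally asymptotically stable) reduced system with the (uniformly asymptotically stable, uniformly in the slow state) boundary layer — in the nonlinear, set-stability formulation appropriate here, e.g. a Tikhonov-type or composite-Lyapunov result, or the reduction theorems for asymptotic stability of the combined attractor $\mathcal{A} = \mathcal{A}_{\textup{s}} \times \{P^\star\}$. The conclusion such a theorem yields is exactly the semiglobal-in-$g$ statement claimed: fix any compact set $\mathcal{K}$ of initial conditions; then there is $g^\star(\mathcal{K})$ such that for all $g \ge g^\star$ the set $\mathcal{A}$ is uniformly asymptotically stable with basin containing $\mathcal{K}$. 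The reason the result is only semiglobal and practical-in-$g$ rather than global is that the slow dynamics of Theorem \ref{theorem_ARE} need not be globally exponentially stable or have a global Lyapunov function with a quadratic-type bound, so the time-scale separation $g$ required to dominate the interconnection terms grows with the size of the initial-condition set; I would be careful to state the composite-Lyapunov estimates only on the (arbitrary but fixed) compact set where the slow trajectory remains, using forward invariance of a sublevel set obtained from the reduced-order Lyapunov function.

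The main obstacle, and where most of the technical work lies, is establishing the quantitative cross-terms between the slow and fast subsystems with enough uniformity to close the composite Lyapunov argument on an arbitrary compact set. Concretely: (i) one needs a Lyapunov function $W(\hP,\hA)$ for the boundary layer with the property that its decay rate and the bounds relating it to $|\hP - \mathcal{P}(\hA)|$ are uniform over $\hA \in \KAL$ — this uses continuity and boundedness of $\mathcal{P}(\cdot)$ and its derivative on the compact set $\KAL$, which Assumption \ref{as:ctrl_det} guarantees (the map $\mathcal{P}$ is smooth there since the Riccati solution depends smoothly on $\hA$ when $(\hA,B)$ is controllable); (ii) one must bound the "interconnection" perturbation entering the slow dynamics, namely the difference between $\pi(x,z,d)$ with $\hP$ and with $\mathcal{P}(\hA)$, which enters the plant and the error dynamics \eqref{eq:error_dynamics_2} and the adaptive law \eqref{eq:adaptive_dynamics} linearly in $(\hP - \mathcal{P}(\hA))$, hence is Lipschitz and vanishes on the manifold; (iii) one must bound $\frac{d}{dt}\mathcal{P}(\hA(t)) = \frac{\partial \mathcal{P}}{\partial \hA}\cdot \dhA$, which is $O(\gamma)$ and bounded because the projected gradient update \eqref{eq:identifier} is bounded — this controls how fast the quasi-steady-state manifold moves and contributes an $O(1/g)$ term after rescaling. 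Collecting these, the standard ISS-style small-gain estimate for singularly perturbed systems gives the threshold on $g$. Once the cross-terms are under control, the remaining verification — that $\mathcal{A} \subset \W\times\mathbb{R}^{\dimx}\times\Lz$ with $\hA = A$, $\hKe = 0$, $x = x_{\textup{m}}$, $\epsilon = 0$, $P = P^\star$, and that on $\mathcal{A}$ the applied policy equals the optimal one — is immediate from the characterization of $\mathcal{A}_{\textup{s}}$ in \eqref{eq:omega_set_of_reduced} together with $\mathcal{P}(A) = P^\star$ and the form \eqref{eq:learning_set_mrarl} of $\Lz$.
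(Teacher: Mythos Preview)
Your overall strategy---treat $\hP$ as a fast variable, identify the boundary-layer DRE with equilibrium $\mathcal{P}(\hA)$ that is UGAS/ULES uniformly over the compact $\KAL$, identify the reduced system with that of Theorem~\ref{theorem_ARE}, and invoke a singular-perturbation theorem---matches the paper's approach. However, there is a genuine gap in the step where you claim that ``the conclusion such a theorem yields is exactly the semiglobal-in-$g$ statement claimed.'' General singular-perturbation results of the kind you invoke (including the one the paper actually uses, \cite[Thm.~1]{teel2003unified}) deliver only semiglobal \emph{practical} asymptotic stability: for every compact $\mathcal{K}$ and every $\delta>0$ there exists $g^\star$ such that for $g\ge g^\star$ trajectories from $\mathcal{K}$ eventually enter and remain in the $\delta$-neighborhood of $\mathcal{A}$. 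This is strictly weaker than uniform asymptotic stability of $\mathcal{A}$ itself. The paper closes this gap with an additional argument you omit: it chooses $\delta$ small enough that, inside the $2\delta$-neighborhood of $\mathcal{A}$, the perturbed reference model \eqref{eq:model_reference} (now driven by $\hP$ rather than $\mathcal{P}(\hA)$) still produces persistently exciting $x_{\textup{m}}$, $x$, and $\xi$, using robustness of PE to small signal perturbations. With PE recovered, Lemmas~\ref{lemma:AC_UGAS} and~\ref{lemma:id_UGES} together with local exponential stability of the DRE give exponential convergence of $(\tilde{\epsilon},\tA,\hP,e,\tKe)$ to their target values, after which an $\omega$-limit-set argument as in the proof of Theorem~\ref{theorem_ARE} yields actual uniform asymptotic stability of $\mathcal{A}$ and the identification $\mathcal{A}=\mathcal{A}_{\textup{s}}\times\{P^\star\}$.

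Your alternative composite-Lyapunov route faces a related obstruction: Theorem~\ref{theorem_ARE} does not construct a global Lyapunov function for the reduced attractor $\mathcal{A}_{\textup{s}}$ (it uses $\omega$-limit sets and \cite[Cor.~7.7]{goebel2012hybrid}), so the quadratic-type bounds you would need on the slow side are not directly available. Either route therefore requires the missing PE-preservation/bootstrapping step that upgrades practical stability to asymptotic stability; without it your argument can conclude only that trajectories get arbitrarily close to $\mathcal{A}$ for large $g$, not that $\mathcal{A}$ is uniformly asymptotically stable.
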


\section{Algorithm Analysis}\label{sec:analysis}

	In the following, we will only study the properties of the reduced-order version of the algorithm, i.e., with $\hP(t)=\mathcal{P}(A(t))$ for all $t$ and $\Pta$ satisfying \eqref{eq:ARE_static}.
	The second result, i.e., the stability of Algorithm~\ref{alg:MRARL} (implementing the DRE), is obtained by invoking singular perturbations techniques. 
	
	\subsection{Error Dynamics}\label{sec:erros_system}
	
		We begin the analysis by presenting the closed-loop dynamics in error coordinates, which is used to provide the technical results of the following subsections.
		
		\subsubsection{Identifier Dynamics}
		
			Consider the error coordinate $\tilde{\epsilon}$ in \eqref{eq:tilde_eps}, which can be written as
			\begin{equation}
				\begin{split}
					\tilde{\epsilon} & \coloneqq A\xi - (x + \zeta).
				\end{split}
			\end{equation}
			Then, from \eqref{eq:plant_dynamics}, \eqref{eq:swapping}, it holds that
			\begin{equation}\label{eq:tilde_eps_dot}
				\begin{split}
					\dot{\tilde{\epsilon}} &= A(-\lambda \xi + x) - (Ax + Bu - \lambda(x + \zeta) - Bu)\\
					&= -\lambda(A\xi - (x + \zeta)) = -\lambda \tilde{\epsilon},
				\end{split}
			\end{equation}
			which ensures that the prediction error 
			$\epsilon \coloneqq \hat{A}\xi - (x + \zeta) =(\hat{A} - A)\xi + \tilde{\epsilon}$ converges to $(\hA - A)\xi$ exponentially.
			\par Define $\tA \coloneqq \hA - A$. Then, from \eqref{eq:tilde_eps}, \eqref{eq:tilde_eps_dot}, we can rewrite the identifier dynamics \eqref{eq:swapping}, \eqref{eq:identifier}, \eqref{eq:prediction_error} in error coordinates as the following cascaded system
			\begin{equation}\label{eq:id_error_dyn}
				\begin{split}
					\dot{\tilde{\epsilon}} &= -\lambda \tilde{\epsilon}\\
					\dtA &=\! \proj_{\hA \in \mathcal{C}} \!
					\left\{- \gamma BB^\dagger \frac{\tA \xi\xi^\top + \tilde{\epsilon}\xi^\top}{1 + \nu |\xi| |\epsilon|}\right\},
				\end{split}
			\end{equation}
			driven by $\xi(t)$, solution of the filter
			\begin{equation}\label{eq:xi_dyn}
				\dot{\xi} = -\lambda \xi + x.
			\end{equation}
			
			\begin{remark}\label{remark:PE_necessity}
				To ensure $\hA(t)\to A$, it is known from the adaptive control literature that vector $\xi(t)$ must be a persistently exciting (PE) signal \cite{ioannou1996robust}.
				However, notice that $\xi(t)$ is a filtered version of $x(t)$, which is generated in closed-loop by interconnecting the plant and the controller. For this reason, special care will be dedicated to its analysis.
			\end{remark}
			
		\subsubsection{Reference Model Dynamics}
		
			From \eqref{eq:ARE_static}, when $\hP = \Pta$, system \eqref{eq:model_reference} can be written highlighting the dependence on the estimate $\hA$ of the identifier:
			\begin{equation}\label{eq:model_ref_detailed}
				\dot{x}_{\text{m}} = (\hA- B R^{-1} B^\top \Pta)\xm + Bd,
			\end{equation}
			where from \eqref{eq:identifier}, \eqref{eq:ARE_static}, the pointwise-in-time value of $\Pta$ is provided implicitly as the solution of a parameter-varying ARE.
			By \cite[Thm. 4.1]{ran1988parameter}, $\Pta$ is an analytic function of $\hA$, being all matrices of ARE $\mathcal{R}(P, \hA)=0$ in \eqref{eq:ARE_implementation} analytic functions of $\hA\in \KAL$.
			From this fact, matrix $\hA- B R^{-1} B^\top \Pta$ is Hurwitz and an analytic function of $\hA$.
			
		\subsubsection{Adaptive Tracking Dynamics} 
		
			We conclude this overview by studying the interconnection of the error dynamics \eqref{eq:error_dynamics_2} and the adaptive controller \eqref{eq:adaptive_dynamics}, \eqref{eq:system_input}.
			We define $\tKe\coloneqq \hKe - \Ke$.
			By choosing~\eqref{eq:system_input} as input for~\eqref{eq:error_dynamics}, we obtain:
			\begin{equation}\label{eq:error_dyn}
				\begin{split}
					\de = (\hA-BR^{-1}B^\top \Pta) e + B(\hKe x - \Kta x)
				\end{split}
			\end{equation}
			By choosing expression \eqref{eq:explicit_Ke} for $\Ke$, we can explicitly calculate the variation in time of $\Ke$ due to the movement of $\hA$.
			This is out of the standard framework of model reference adaptive control, and thus particular attention is required.
			We can calculate the time derivative of $\Ke$ by deriving \eqref{eq:explicit_Ke}:
			\begin{equation}\label{eq:mrac_drift}
				\begin{split}
					\dKe &= B^\dagger \dhA.
				\end{split}
			\end{equation}
			Since both $B$ and $\dhA$ are known, we can use their knowledge to implement adaptive law \eqref{eq:adaptive_dynamics}, which takes into account this drift.
			Given equations \eqref{eq:adaptive_dynamics} and \eqref{eq:mrac_drift}, the induced dynamics for $\tKe$ is:
			\begin{equation}\label{eq:adaptive_error_system}
				\begin{split}
					\dtKe &= \dhKe - \dKe \\
					&= -\mu B^\top \Pta (x-\xm) x^\top + B^\dagger \dhA - B^\dagger \dhA\\
					&= -\mu B^\top \Pta e x^\top.
				\end{split}
			\end{equation}
		
	\subsection{Global Boundedness of Solutions}\label{sec:boundedness}
	
		We now show boundedness and forward completeness of the solutions of the closed-loop system obtained from the interconnection of the identifier dynamics \eqref{eq:id_error_dyn}, \eqref{eq:xi_dyn}, the reference model \eqref{eq:model_ref_detailed}, and the adaptive error system \eqref{eq:error_dyn}, \eqref{eq:adaptive_error_system}.
		The overall analysis entails proving uniform bounds on the solutions of the main involved subsystems, then combining the results using arguments similar to \cite[Thm. 6.3]{krstic1995nonlinear} (see the proof of Proposition \ref{lemma:interconnection_boundedness}).
		To increase readability, we leave the proofs of the technical lemmas in the Appendix.
		\par We begin by showing uniform boundedness of $\hA$ and $\dhA$.
		\begin{lemma}\label{lemma:bounded_id}
			Let the maximal interval of solutions of \eqref{eq:id_error_dyn}, \eqref{eq:xi_dyn}, \eqref{eq:model_ref_detailed}, \eqref{eq:error_dyn}, \eqref{eq:adaptive_error_system} be $[0, t_f)$.
			Then, it holds that 
			\begin{enumerate}[I]
				\item) $\tilde{\epsilon}(\cdot), \tA(\cdot)$ are uniformly bounded in the interval $[0, t_f)$
				\item) $\hA(t) \in \KAL$ for all $t \in [0, t_f)$
			\end{enumerate}
			Furthermore, if $t_f = \infty$, the origin $(\tilde{\epsilon},\tA) = 0$ of system \eqref{eq:id_error_dyn}, driven by input $\xi(t)$, is uniformly globally stable (UGS).
		\end{lemma}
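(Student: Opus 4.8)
The plan is to exploit the cascade structure of \eqref{eq:id_error_dyn}: the autonomous linear $\tilde\epsilon$-subsystem drives the $\tA$-subsystem, which is itself driven by the filter signal $\xi(t)$. The $\tilde\epsilon$-component is immediate: from $\dot{\tilde\epsilon}=-\lambda\tilde\epsilon$ with $\lambda>0$ we get $\tilde\epsilon(t)=e^{-\lambda t}\tilde\epsilon(0)$ on $[0,t_f)$, so $\tilde\epsilon(\cdot)$ is uniformly bounded (indeed exponentially vanishing), which also makes $\tilde\epsilon$ a vanishing input for the $\tA$-dynamics. It remains to control $\tA$ and to certify the invariance and stability claims.

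For the membership $\hA(t)\in\KAL$ I use separately the two projections in \eqref{eq:identifier}. The outer parameter projection $\proj_{\hA\in\mathcal{C}}\{\cdot\}$, by its standard construction \cite[Appendix E]{krstic1995nonlinear}, renders the convex set $\mathcal{C}$ forward invariant, so $\hA(0)\in\KAL\subset\mathcal{C}$ gives $\hA(t)\in\mathcal{C}$ throughout $[0,t_f)$. For the affine-subspace part, premultiplication by $BB^\dagger$, the orthogonal projector onto $\Image(\mapB)$, forces the bracketed argument of the projection to lie in $\Image(\mapB)$; moreover, for the admissible $\mathcal{C}$ of the Remark after Assumption \ref{as:ctrl_det} (e.g.\ a Frobenius ball centered at $A_0$) the correction term added by $\proj_{\hA\in\mathcal{C}}$ points along the gradient of the defining constraint, which on the affine set $A_0+\Image(\mapB)$ again lies in $\Image(\mapB)$. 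Hence $\dhA\in\Image(\mapB)$ along solutions, and integrating from $\hA(0)\in A_0+\Image(\mapB)$ gives $\hA(t)\in A_0+\Image(\mapB)$; combined with $\hA(t)\in\mathcal{C}$ this yields $\hA(t)\in\KAL$, and, using $A_0-A\in\Image(\mapB)$ from Assumption \ref{as:matching}, also $\tA(t)\in\Image(\mapB)$.

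For the boundedness of $\tA$ and the UGS claim I use $V(\tA)=\tfrac12|\tA|_F^2$. Differentiating along \eqref{eq:id_error_dyn}, the defining inequality of the parameter projection applied with the reference point $A\in\mathcal{C}$ (Assumption \ref{as:ctrl_det}) gives $\dot V\le\trace(\tA^\top\tau)$, where $\tau\coloneqq-\gamma BB^\dagger(\tA\xi\xi^\top+\tilde\epsilon\xi^\top)/(1+\nu|\xi||\epsilon|)$ is the unprojected update. Expanding and using the symmetry and idempotency of $BB^\dagger$, $\trace(\tA^\top\tau)=-\tfrac{\gamma}{1+\nu|\xi||\epsilon|}\big(|BB^\dagger\tA\xi|^2+(BB^\dagger\tA\xi)^\top BB^\dagger\tilde\epsilon\big)$; a Young inequality on the cross term then gives $\dot V\le\tfrac{\gamma}{4}|\tilde\epsilon|^2\le\tfrac{\gamma}{4}|\tilde\epsilon(0)|^2e^{-2\lambda t}$, so integrating, $|\tA(t)|_F^2\le|\tA(0)|_F^2+\tfrac{\gamma}{4\lambda}|\tilde\epsilon(0)|^2$ on $[0,t_f)$, which is the uniform bound of assertion I. When $t_f=\infty$, the same estimates yield $|\tilde\epsilon(t)|^2+|\tA(t)|_F^2\le(1+\tfrac{\gamma}{4\lambda})\big(|\tilde\epsilon(0)|^2+|\tA(0)|_F^2\big)$ for all $t\ge0$, a linear-gain bound that is uniform over all inputs $\xi(\cdot)$; since $(\tilde\epsilon,\tA)=0$ is an equilibrium of \eqref{eq:id_error_dyn} (at $\tA=0$ one has $\hA=A\in\Int(\mathcal{C})$, the projection is inactive, and the right-hand side vanishes), this bound is exactly uniform global stability of the origin.

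The step I expect to be the main obstacle is the affine-subspace invariance inside the membership claim: reconciling the inner projector $BB^\dagger$ with the outer parameter projection $\proj_{\hA\in\mathcal{C}}$, i.e.\ verifying that the correction term of the Lipschitz projection cannot push $\hA$ off $A_0+\Image(\mapB)$. This is where the explicit geometry of $\mathcal{C}$ and of the associated projection operator must be used; the remaining Lyapunov and cascade bookkeeping is routine.
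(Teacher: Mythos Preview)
Your argument is correct and arrives at the same conclusions, but the route differs from the paper's. The paper works in vectorized coordinates and uses a single joint Lyapunov function
\[
V_A(\tilde\epsilon,\tTa)=\tfrac{1}{\lambda}|\tilde\epsilon|^2+\tfrac{1}{2\gamma}|\tTa|^2,
\]
then exploits the fact that $\tTa\in\Image(I_n\otimes B)$ to drop the inner projector $\bar B$ entirely (via $\tTa^\top\bar B\tau=\tTa^\top\tau$), and after completing a square obtains $\dot V_A\le 0$ directly; UGS and boundedness then follow from sublevel-set invariance in one stroke. You instead treat the system as a cascade: solve $\tilde\epsilon$ explicitly, use $V=\tfrac12|\tA|_F^2$ for the $\tA$-block alone, keep the projector $BB^\dagger$ in the estimate, and absorb the cross term by Young's inequality to get $\dot V\le\tfrac{\gamma}{4}|\tilde\epsilon|^2$, which you integrate against the exponential decay of $\tilde\epsilon$. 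Both are standard adaptive-control maneuvers; the paper's joint function is slightly slicker (it yields $\dot V_A\le 0$ without integrating a forcing term), while your cascade argument is more elementary and makes the $\xi$-independence of the bound completely transparent. A minor point: your integrated constant should be $\tfrac{\gamma}{8\lambda}$ rather than $\tfrac{\gamma}{4\lambda}$, since $\int_0^\infty e^{-2\lambda s}\,\mathrm{d}s=\tfrac{1}{2\lambda}$; this is cosmetic. On the membership claim $\hA(t)\in\KAL$, you are actually more explicit than the paper, which simply invokes \cite[Lemma E.1]{krstic1995nonlinear} without discussing the compatibility of the outer projection with the affine subspace $A_0+\Image(\mapB)$; your observation that for a Frobenius ball centered at $A_0$ the projection correction lies in $\Image(\mapB)$ is the right way to close that point.
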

		\begin{lemma}\label{lemma:bounded_A_dot}
			Let the maximal interval of solutions of \eqref{eq:id_error_dyn}, \eqref{eq:xi_dyn}, \eqref{eq:model_ref_detailed}, \eqref{eq:error_dyn}, \eqref{eq:adaptive_error_system} be $[0, t_f)$.
			Then, it holds that
			\begin{equation}
				|\dhA(t)| \leq \gamma, \qquad \forall t \in [0, t_f).
			\end{equation}
		\end{lemma}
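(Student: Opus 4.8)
The plan is to bound the right-hand side of the identifier law \eqref{eq:identifier} pointwise in time; no Lyapunov or comparison argument is needed, since the estimate follows directly from the structure of the vector field. Write $\dhA = \proj_{\hA\in\mathcal{C}}\{\tau\}$ with $\tau \coloneqq -\gamma\, BB^\dagger\, \epsilon\xi^\top/(1+\nu|\xi||\epsilon|)$. By Lemma~\ref{lemma:bounded_id}, $\hA(t)$ stays in $\KAL\subset\mathcal{C}$ on $[0,t_f)$, so the projection is well defined along the whole solution. The Lipschitz parameter projection of \cite[App.~E]{krstic1995nonlinear} — applied to the vectorized matrix parameter $\hA$ relative to the closed convex set $\mathcal{C}$ — satisfies the standard non-expansiveness property $|\proj_{\hA\in\mathcal{C}}\{\tau\}|\le|\tau|$ for every direction $\tau$ and every $\hA\in\mathcal{C}$; hence it suffices to show $|\tau(t)|\le\gamma$ on $[0,t_f)$.

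To estimate $|\tau|$, note that $BB^\dagger$ is the orthogonal projector onto $\Image(\mapB)$ \cite[Sec.~5.5.4]{golub2013matrix}, so $|BB^\dagger|\le 1$, and that $\epsilon\xi^\top$ has rank at most one, so $|\epsilon\xi^\top| = |\epsilon|\,|\xi|$. Therefore
\begin{equation*}
	|\tau| \;\le\; \gamma\,|BB^\dagger|\,\frac{|\epsilon\xi^\top|}{1+\nu|\xi||\epsilon|} \;\le\; \gamma\,\frac{|\xi|\,|\epsilon|}{1+\nu|\xi|\,|\epsilon|} \;\le\; \gamma,
\end{equation*}
the last inequality being exactly the effect of the normalization built into \eqref{eq:identifier}. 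Combining with the projection bound gives $|\dhA(t)|\le\gamma$ for all $t\in[0,t_f)$, which is the claim.

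The only point that requires any care is invoking the correct non-expansiveness property of the \emph{matrix-valued} Lipschitz parameter projection; everything else is the normalization and the unit bound on $BB^\dagger$, so I do not expect a genuine obstacle here. The value of this lemma lies in the fact that $\dhA$ is bounded \emph{uniformly in time} by the tunable, small identifier gain $\gamma$: this is precisely what later lets one treat $\dhA$ — and hence both the drift term $B^\dagger\dhA$ in the adaptive law \eqref{eq:adaptive_dynamics} and the time variation \eqref{eq:mrac_drift} of $\Ke$ — as an $O(\gamma)$ perturbation in the boundedness proof (Proposition~\ref{lemma:interconnection_boundedness}) and in the subsequent stability analysis of the interconnected error system.
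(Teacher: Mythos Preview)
Your proof is correct and follows essentially the same approach as the paper's: both use the non-expansiveness of the Lipschitz parameter projection from \cite[App.~E]{krstic1995nonlinear}, the unit norm of the orthogonal projector $BB^\dagger$, and the normalization in the denominator to arrive at the bound $|\dhA|\le\gamma\,|\xi||\epsilon|/(1+\nu|\xi||\epsilon|)\le\gamma$. The only cosmetic difference is that the paper first vectorizes and passes through the Frobenius norm, $|\dhA|\le|\dhA|_F=|\dhTa|$, with $\bar B=I_n\otimes BB^\dagger$ and $|(\xi\otimes I_n)|=|\xi|$, whereas you work directly with the matrix $2$-norm using $|\epsilon\xi^\top|=|\epsilon|\,|\xi|$; the two routes are equivalent.
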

		\begin{remark}
			The above results hold even if the input $\xi(t)$ of the identifier escapes to infinity as $t \to t_f$.
		\end{remark}
		Although the overall boundedness analysis entails also the study of $\xi(t)$, system \eqref{eq:xi_dyn} ISS with respect to input $x(t)$, thus its behavior will be analyzed directly in Proposition \ref{lemma:interconnection_boundedness}.
		
		Then, we show that the reference model \eqref{eq:model_ref_detailed} is bounded as long as $|\dhA(t)|$ is sufficiently small.
		\begin{lemma}\label{lemma:MR_ISS}
			Let the maximal interval of solutions of \eqref{eq:id_error_dyn}, \eqref{eq:xi_dyn}, \eqref{eq:model_ref_detailed}, \eqref{eq:error_dyn}, \eqref{eq:adaptive_error_system} be $[0, t_f)$.
			There exists $\gamma^\star_b > 0$ such that, if $|\dhA(t)| \leq \gamma^\star_b$ for all $t \in [0, t_f)$, then $\xm(\cdot)$ is uniformly bounded over the interval $[0, t_f)$.
			Furthermore, if $t_f = \infty$, then the reference model \eqref{eq:model_ref_detailed} with input $d(t)$ is input-to-state stable.
		\end{lemma}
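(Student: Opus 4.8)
The plan is to view \eqref{eq:model_ref_detailed} as the linear time-varying system $\dxm = M(\hA(t))\xm + Bd$ with $M(\hA) \coloneqq \hA - BR^{-1}B^\top\Pta$, and to exploit that for each frozen $\hA \in \KAL$ the matrix $M(\hA)$ is Hurwitz and depends regularly on $\hA$ over the compact set $\KAL$. When $\hA$ varies slowly enough, the unforced dynamics stay uniformly exponentially stable, so the bounded forcing $Bd$ can only produce a bounded (in fact, ISS) response. This is the classical slowly-varying-systems argument, carried out with a parameter-dependent Lyapunov function.

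\emph{Step 1 (uniform parameter-dependent Lyapunov certificate).} As recalled after \eqref{eq:model_ref_detailed}, $\hA \mapsto \Pta$ is analytic on $\KAL$ and $M(\hA)$ is Hurwitz for every $\hA \in \KAL$. The map sending a Hurwitz $M$ to the solution $X$ of $M^\top X + XM = -I$ is analytic on the open set of Hurwitz matrices, since $\vecc(X) = -(I\otimes M^\top + M^\top\otimes I)^{-1}\vecc(I)$ and the Kronecker sum is nonsingular precisely because no two eigenvalues of a Hurwitz $M$ sum to zero. Composing, there is an analytic $P_L\colon \KAL \to \Pdef$ with $M(\hA)^\top P_L(\hA) + P_L(\hA)M(\hA) = -I$ on $\KAL$. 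Since $\KAL = \mathcal{C}\cap(A_0+\Image(\mapB))$ is compact and convex (Assumption~\ref{as:ctrl_det}, with $\mathcal{C}$ a ball as in the subsequent remark), and $P_L$ is $C^1$ and positive definite there, there exist constants $0 < \underline{p} \le \overline{p}$ and $L_P > 0$ such that $\underline{p}I \le P_L(\hA) \le \overline{p}I$ for all $\hA\in\KAL$ and $|P_L(\hA_1) - P_L(\hA_2)| \le L_P|\hA_1-\hA_2|$.

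\emph{Step 2 (Lyapunov inequality and choice of $\gamma^\star_b$).} Let $W(t) \coloneqq \xm(t)^\top P_L(\hA(t))\xm(t)$. Differentiating along \eqref{eq:model_ref_detailed}, using the Lyapunov identity for the first two terms and the bound $\big|\tfrac{d}{dt}P_L(\hA(t))\big| \le L_P|\dhA(t)|$ a.e. (chain rule: $t\mapsto\hA(t)$ is locally Lipschitz because the projected update \eqref{eq:identifier} has bounded right-hand side, and $\dhA\in\Image(\mapB)$, the tangent space of $\KAL$), one obtains a.e.
\[
\dot{W} \;\le\; -|\xm|^2 + 2\overline{p}|B|\,|\xm|\,|d| + L_P|\dhA|\,|\xm|^2 .
\]
Set $\gamma^\star_b \coloneqq 1/(2L_P)$. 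If $|\dhA(t)| \le \gamma^\star_b$ for all $t\in[0,t_f)$, the last term is at most $\tfrac12|\xm|^2$, and Young's inequality on the cross term together with $|\xm|^2 \ge W/\overline{p}$ yields
\[
\dot{W} \;\le\; -\tfrac14|\xm|^2 + 4\overline{p}^{\,2}|B|^2|d|^2 \;\le\; -\tfrac{1}{4\overline{p}}\,W + 4\overline{p}^{\,2}|B|^2|d|^2 .
\]
By the comparison lemma, $W(t) \le e^{-t/(4\overline{p})}W(0) + 16\,\overline{p}^{\,3}|B|^2\sup_{[0,t_f)}|d|^2$ for all $t\in[0,t_f)$, so $|\xm(t)|^2 \le W(t)/\underline{p}$ is uniformly bounded over $[0,t_f)$ by a quantity depending only on $|\xm(0)|$ and $\sup_{[0,t_f)}|d|$, hence independent of $t_f$; this is the first claim. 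If $t_f = \infty$, the same inequality exhibits $W$ as an ISS-Lyapunov function for \eqref{eq:model_ref_detailed} with input $d$, giving the asserted input-to-state stability.

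\emph{Main obstacle.} The only nonroutine ingredient is Step~1: producing a \emph{uniform}, $C^1$-in-$\hA$ Lyapunov certificate with a finite Lipschitz constant $L_P$. This rests on analyticity of $\hA\mapsto\Pta$ (from \cite[Thm.~4.1]{ran1988parameter}), analyticity of the Lyapunov-equation solution map on the Hurwitz set, and compactness of $\KAL$; everything after that is the standard slowly-varying-systems estimate. A secondary technical point worth flagging is the a.e. chain-rule bound on $\tfrac{d}{dt}P_L(\hA(t))$, which is legitimate since $\hA(\cdot)$ is Lipschitz (indeed $|\dhA|\le\gamma$ by Lemma~\ref{lemma:bounded_A_dot}) and stays in the convex compact set $\KAL$ on which $P_L$ is Lipschitz.
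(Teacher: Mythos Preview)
Your proof is correct and follows the same slowly-varying-systems Lyapunov argument as the paper: a parameter-dependent quadratic Lyapunov function in $\xm$, with the time derivative of the Lyapunov matrix bounded by $|\dhA|$ via smoothness on the compact set $\KAL$, and $\gamma_b^\star$ chosen to absorb that term. The only difference is the choice of Lyapunov matrix: the paper uses $\Pta$ itself, so that the ARE \eqref{eq:ARE_implementation} directly gives $\Acl^\top\Pta+\Pta\Acl=-Q-\Kta^\top R\Kta$, whereas you introduce a separate Lyapunov-equation solution $P_L(\hA)$ with right-hand side $-I$; this adds a layer of indirection but changes nothing structurally, and your careful treatment of the a.e.\ chain rule is, if anything, more rigorous than the paper's.
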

		Next, we provide a statement for system \eqref{eq:error_dyn}, \eqref{eq:adaptive_error_system}.
		\begin{lemma}\label{lemma:AC_ugs}
			Let the maximal interval of solutions of \eqref{eq:id_error_dyn}, \eqref{eq:xi_dyn}, \eqref{eq:model_ref_detailed}, \eqref{eq:error_dyn}, \eqref{eq:adaptive_error_system} be $[0, t_f)$.
			Pick $\gamma^\star_b>0$ from Lemma~\ref{lemma:MR_ISS} and let $|\dhA(t)| \leq \gamma^\star_b$ for all $t \in [0, t_f)$.
			Then, signals $e(\cdot), \tKe(\cdot)$ are uniformly bounded in the interval $[0, t_f)$.
			Furthermore, if $t_f = \infty$, the origin $(e, \tKe) = 0$ of system \eqref{eq:error_dyn}, \eqref{eq:adaptive_error_system}, with input $\hA(t)$, is UGS. 
		\end{lemma}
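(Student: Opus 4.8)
The plan is to prove both claims with a single Lyapunov function of the classical model-reference adaptive-control type, the one non-standard feature being that its weighting matrix is the state-dependent ARE solution $\Pta$ rather than a constant. Recalling $e = x-\xm$ and $\tKe = \hKe - \Ke$ with $\Ke = B^\dagger(\hA-A)$, the pair \eqref{eq:error_dyn}, \eqref{eq:adaptive_error_system} reads $\de = \Acl e + B\tKe x$, $\dtKe = -\mu B^\top\Pta e x^\top$, with $\Acl = \hA - BR^{-1}B^\top\Pta$ and $x = e+\xm$. I would take $V \coloneqq e^\top\Pta e + \tfrac{1}{\mu}|\tKe|_F^2$. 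By Lemma~\ref{lemma:bounded_id} we have $\hA(t)\in\KAL$ for all $t\in[0,t_f)$; since $\KAL$ is compact and $\hA\mapsto\Pta$ is continuous and positive-definite-valued (Assumption~\ref{as:ctrl_det}), there exist $0<\underline p\le\bar p$ with $\underline p I\le\Pta\le\bar p I$ on $\KAL$, so $V$ is bounded above and below by class-$\mathcal{K}_\infty$ functions of $|(e,\tKe)|$, uniformly in $\hA$.

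The core step is computing $\dot V$ along the closed-loop solutions. The term $2e^\top\Pta B\tKe x$ produced by $\de$ cancels identically with $-2\trace(\tKe^\top B^\top\Pta e x^\top) = -2e^\top\Pta B\tKe x$ produced by $\dtKe$ (by trace cyclicity) --- this is exactly the cancellation that the $\hP$-weighting of the adaptive law \eqref{eq:adaptive_dynamics} is designed to produce --- leaving $\dot V = e^\top(\Acl^\top\Pta + \Pta\Acl + \tfrac{d}{dt}\Pta)e$. Substituting $\hA^\top\Pta + \Pta\hA = \Pta BR^{-1}B^\top\Pta - Q$ from \eqref{eq:ARE_implementation} gives $\Acl^\top\Pta + \Pta\Acl = -(Q + \Pta BR^{-1}B^\top\Pta) \eqqcolon -\bar Q \le 0$. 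For the drift, $\hA\mapsto\Pta$ is analytic on $\KAL$ (all coefficients of the ARE being analytic in $\hA$ and $\Acl$ Hurwitz there; cf.\ \cite[Thm.~4.1]{ran1988parameter}), hence globally Lipschitz on the compact $\KAL$, so $|\tfrac{d}{dt}\Pta(t)| \le L|\dhA(t)| \le L\gamma^\star_b$ for some $L>0$ (equivalently, differentiating $\mathcal{R}(\Pta,\hA)\equiv 0$ yields the Sylvester equation $\Acl^\top\tfrac{d}{dt}\Pta + \tfrac{d}{dt}\Pta\,\Acl = -(\dhA^\top\Pta + \Pta\dhA)$, with the same consequence). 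Hence $\dot V \le -e^\top\bar Q e + L\gamma^\star_b|e|^2$, and it remains to choose $\gamma^\star_b$ small enough --- a requirement I would add to those already placed on it in Lemma~\ref{lemma:MR_ISS} --- relative to the uniform coercivity of $\bar Q$ over $\KAL$ (strictly positive in particular when $Q>0$; in the degenerate semidefinite case one further exploits that $(\sqrt{\bar Q},\Acl)$ is observable by PBH whenever $(\sqrt Q,\hA)$ is), so that the drift is dominated and $\dot V \le 0$.

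With $\dot V\le 0$, $V(t)\le V(0)$ on $[0,t_f)$ for every $t_f$, and the lower sandwich bound yields the uniform boundedness of $e(\cdot)$ and $\tKe(\cdot)$ on $[0,t_f)$. When $t_f=\infty$, $V$ is a Lyapunov function --- proper and with $\dot V\le 0$, both uniformly over $\hA(t)\in\KAL$ --- for the $(e,\tKe)$-subsystem driven by the input $\hA(t)$, which certifies UGS of its origin; note that $\xm$ enters only through the cross terms that cancelled, so it affects neither $\dot V$ nor the resulting estimate, in line with the statement. I expect the genuine obstacle to be the drift term $\tfrac{d}{dt}\Pta$: it has no counterpart in textbook MRAC, where the Lyapunov matrix is constant, and controlling it requires combining the smoothness of the ARE-solution map with the a priori slew-rate bound $|\dhA|\le\gamma$ supplied by the projected normalized-gradient identifier (Lemma~\ref{lemma:bounded_A_dot}); reconciling the two small constants $\gamma$ and $\gamma^\star_b$ is the delicate bookkeeping.
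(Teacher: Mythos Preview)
Your approach is essentially the paper's: the same Lyapunov function $V = e^\top\Pta e + \mu^{-1}|\tKe|_F^2$ (the paper vectorizes $\tKe$, which is cosmetic since $|\tKe|_F = |\vecc(\tKe)|$), the same cross-term cancellation, and the same decomposition $\dot V = -e^\top\bar Q(\hA)e + e^\top\bigl(\tfrac{d}{dt}\Pta\bigr)e$ with $\bar Q(\hA) = Q + \Pta BR^{-1}B^\top\Pta$.

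One point you can sharpen: you do \emph{not} need to impose an additional smallness requirement on $\gamma^\star_b$. In the proof of Lemma~\ref{lemma:MR_ISS} the paper already bounds the drift by $\bigl|\tfrac{d}{dt}\Pta\bigr| \le c\,n^{3/2}|\dhA|$ (via the entrywise derivative of $\Pta$ over the compact $\KAL$) and defines $q \coloneqq \min_{\hA\in\KAL}\lambda_{\min}\bar Q(\hA)$ and $\gamma^\star_b \coloneqq q/(2cn^{3/2})$. Exactly the same two quantities appear here, so the very same $\gamma^\star_b$ gives $c\,n^{3/2}\gamma^\star_b = q/2$ and hence $\dot V \le -\tfrac{q}{2}|e|^2$ directly. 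The ``delicate bookkeeping'' you anticipate therefore collapses: Lemma~\ref{lemma:MR_ISS} and Lemma~\ref{lemma:AC_ugs} share the same drift term and the same coercivity constant, so a single threshold handles both. Your observability remark for the semidefinite-$Q$ case is more cautious than the paper, which simply asserts $q>0$ from Assumption~\ref{as:ctrl_det}; either way the argument closes.
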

		Finally, we combine the previous results to obtain that solutions are globally bounded and forward complete.
		\begin{proposition}\label{lemma:interconnection_boundedness}
			Consider the closed-loop system obtained from the interconnection of the identifier dynamics \eqref{eq:id_error_dyn}, \eqref{eq:xi_dyn}, the reference model \eqref{eq:model_ref_detailed}, and the adaptive error system \eqref{eq:error_dyn}, \eqref{eq:adaptive_error_system}. Pick $\gamma^\star_b$ from Lemma \ref{lemma:bounded_A_dot}.
			If $\gamma \in (0, \gamma^\star_b]$, then the closed-loop solutions are bounded and forward complete.
		\end{proposition}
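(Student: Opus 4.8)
The plan is to break the algebraic loop hidden in the interconnection — the filter $\xi$ in \eqref{eq:xi_dyn} is driven by the closed-loop state $x$, which feeds back into the identifier and thereby affects every other subsystem — by exploiting the fact that the identifier bounds hold \emph{unconditionally}. Precisely, on the maximal interval of existence $[0,t_f)$, Lemma~\ref{lemma:bounded_A_dot} gives $|\dhA(t)|\le\gamma$ and Lemma~\ref{lemma:bounded_id} gives $\tilde\epsilon(\cdot),\tA(\cdot)$ bounded with $\hA(t)\in\KAL$, and both hold even if $\xi(t)$ were to escape to infinity as $t\to t_f$. These are the anchors from which all the remaining bounds can be propagated acyclically.

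Concretely, I would fix a maximal solution on $[0,t_f)$. Since $\gamma\in(0,\gamma^\star_b]$, Lemma~\ref{lemma:bounded_A_dot} yields $|\dhA(t)|\le\gamma\le\gamma^\star_b$ on $[0,t_f)$. This activates Lemma~\ref{lemma:MR_ISS}, so $\xm(\cdot)$ is uniformly bounded on $[0,t_f)$, and Lemma~\ref{lemma:AC_ugs}, so $e(\cdot),\tKe(\cdot)$ are uniformly bounded on $[0,t_f)$; consequently $x=\xm+e$ is uniformly bounded on $[0,t_f)$. Since \eqref{eq:xi_dyn} is the exponentially stable linear filter $\dot\xi=-\lambda\xi+x$ with $\lambda>0$ driven by the bounded input $x$, $\xi(\cdot)$ is uniformly bounded on $[0,t_f)$ as well. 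Combined with Lemma~\ref{lemma:bounded_id}, the whole closed-loop state $(\tilde\epsilon,\tA,\xi,\xm,e,\tKe)$ then lies in a fixed compact set for all $t\in[0,t_f)$; since these bounds do not depend on $t_f$, no finite escape time is possible, hence $t_f=\infty$ and the solutions are forward complete, and the same bounds are global, giving boundedness.

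For completeness I would then record that, with $t_f=\infty$ in force, the ``furthermore'' parts of Lemmas~\ref{lemma:bounded_id}, \ref{lemma:MR_ISS}, \ref{lemma:AC_ugs} apply — UGS of $(\tilde\epsilon,\tA)=0$ with input $\xi$, ISS of the reference model with input $d$, and UGS of $(e,\tKe)=0$ with input $\hA$ — and, together with the ISS of \eqref{eq:xi_dyn} with respect to $x$, these can be assembled into a single uniform global bound on all closed-loop signals by a cascade argument in the spirit of \cite[Thm.~6.3]{krstic1995nonlinear}.

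I expect the main obstacle to be purely organizational: verifying that the chain of lemma invocations is genuinely acyclic — in particular that no bound on $x$ or $\xi$ is invoked before Lemmas~\ref{lemma:bounded_A_dot} and \ref{lemma:bounded_id} are applied, which is exactly what makes the loop close — and that the several ``uniform over $[0,t_f)$'' estimates compose into one constant independent of $t_f$, so that the no-finite-escape-time argument is rigorous. The substantive analytic content has already been discharged in Lemmas~\ref{lemma:bounded_id}--\ref{lemma:AC_ugs}.
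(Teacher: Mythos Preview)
Your proposal is correct and follows essentially the same route as the paper: invoke Lemmas~\ref{lemma:bounded_id} and \ref{lemma:bounded_A_dot} unconditionally on the maximal interval $[0,t_f)$, use $|\dhA|\le\gamma\le\gamma^\star_b$ to activate Lemmas~\ref{lemma:MR_ISS} and \ref{lemma:AC_ugs}, bound $x=\xm+e$ and then $\xi$ via ISS of the filter, and conclude $t_f=\infty$ by the standard no-finite-escape-time contradiction citing \cite[Thm.~6.3]{krstic1995nonlinear}. Your explicit remark that the identifier bounds hold even if $\xi$ were to blow up is exactly the observation the paper isolates (in the remark following Lemma~\ref{lemma:bounded_A_dot}) to make the chain acyclic.
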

		\begin{proof}
			Suppose that the maximal interval of existence of the solution of \eqref{eq:id_error_dyn}, \eqref{eq:xi_dyn}, \eqref{eq:model_ref_detailed}, \eqref{eq:error_dyn}, and \eqref{eq:adaptive_error_system} is $[0, t_f)$.
			Then, from Lemma \ref{lemma:bounded_id}, $\tA(\cdot)$ and $\tilde{\epsilon}(\cdot)$ are uniformly bounded. 
			From Lemma \ref{lemma:bounded_A_dot}, $|\dhA(\cdot)|$ is uniformly bounded by $\gamma$.
			Consider any $\gamma \in (0, \gamma^\star_b]$, then Lemmas \ref{lemma:MR_ISS} and \ref{lemma:AC_ugs} ensure that $\xm(\cdot)$, $e(\cdot)$, and $\tKe(\cdot)$ are uniformly bounded, thus also $\xi(\cdot)$ is uniformly bounded from \eqref{eq:xi_dyn} and standard ISS results.
			
			We have thus shown that all signals of the closed-loop system are bounded, with bounds that do not depend on $t_f$.
			By contradiction, we conclude that $t_f = \infty$, thus the solutions are forward complete.
			Namely, if $t_f$ were finite, the solutions would leave any compact set as $t \to t_f$, contradicting the independence of the bounds on $t_f$ \cite[Thm. 6.3]{krstic1995nonlinear}.
		\end{proof}
	
	\subsection{Exponential Convergence to the Optimal Policy}\label{sec:convergence}
	
		We now focus on the uniform asymptotic stability properties of the closed-loop system \eqref{eq:id_error_dyn}, \eqref{eq:xi_dyn}, \eqref{eq:model_ref_detailed}, \eqref{eq:error_dyn}, \eqref{eq:adaptive_error_system}.
		First, we show that $\xm(t)$ is persistently exciting as long as $|\dhA|$ is sufficiently small.
		\begin{lemma}\label{lemma:x_m_uPE}
			Let the entries of stationary input $d$ be sufficiently rich of order $n + 1$ and uncorrelated.
			There exists $\gamma^\star_{PE} \in (0, \gamma^\star_b]$, with $\gamma_b^\star$ from Proposition \ref{lemma:interconnection_boundedness}, such that, for all $\gamma \in (0, \gamma^\star_{PE}]$, the solutions $x_{\textup{m}}(t)$ of the reference model \eqref{eq:model_ref_detailed} are persistently exciting (PE).
		\end{lemma}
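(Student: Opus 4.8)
The plan is to compare the reference model \eqref{eq:model_ref_detailed} --- a stable linear system with a \emph{slowly varying} state matrix, driven by the stationary dither $d$ --- with the family of frozen linear time-invariant (LTI) systems obtained by fixing $\hA$. First I would reduce the claim to a Gramian lower bound: since $|\dhA(t)| \le \gamma$ by Lemma~\ref{lemma:bounded_A_dot} and $\xm$ is uniformly bounded by Lemma~\ref{lemma:MR_ISS}, say $|\xm(t)| \le M$, it suffices to exhibit $\underline\mu, T > 0$ such that $\int_{t}^{t+T}\xm(\tau)\xm(\tau)^\top\,d\tau \ge \underline\mu I$ for all $t \ge 0$.

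\emph{Step one: frozen systems.} For every constant $\bar A$ in the compact region of $\KAL$ in which $\hA(t)$ evolves (Lemma~\ref{lemma:bounded_id}), the LTI model $\dot v = A_m(\bar A)v + Bd$, with $A_m(\bar A) \coloneqq \bar A - BR^{-1}B^\top\mathcal{P}(\bar A)$, has a Hurwitz state matrix (Remark~\ref{remark:cl_stability}) and a controllable pair $(A_m(\bar A), B)$, since controllability of $(\bar A, B)$ from Assumption~\ref{as:ctrl_det} is invariant under state feedback. Because $d$ is stationary with entries sufficiently rich of order $n+1$ and uncorrelated --- the uncorrelatedness ensuring that the $m$ channels jointly excite all $n$ directions of the state, and the order $n+1$ providing a margin against transmission zeros of the multi-input filter --- I would invoke classical results linking input richness to persistence of excitation of signals filtered through stable minimal systems \cite{ioannou1996robust} to conclude that the stationary response $v_{\mathrm{ss}}^{\bar A}(\cdot)$ is PE; by the analyticity of $\mathcal{P}(\cdot)$ in $\hA$ and compactness I would then extract excitation constants $\underline\mu, T > 0$ that are uniform over all admissible frozen values $\bar A$.

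\emph{Step two: slow variation and conclusion.} By standard estimates for stable linear systems with slowly varying coefficients, the actual trajectory $\xm(t)$ tracks the frozen stationary response: setting $\delta(t) \coloneqq \xm(t) - v_{\mathrm{ss}}^{\hA(t)}(t)$ gives $\dot\delta = A_m(\hA(t))\delta - \big(\partial_{\bar A} v_{\mathrm{ss}}^{\bar A}(t)\big)\big|_{\bar A = \hA(t)}\dhA$, a uniformly exponentially stable system (for $\gamma$ small) forced by a term of size $O(\gamma)$, whence $|\delta(t)| \le c\, e^{-\rho t} + O(\gamma)$ for some $c, \rho > 0$. Thus, past a settling time and for $\gamma$ small, $\xm$ is $O(\gamma)$-close (uniformly in time) to $v_{\mathrm{ss}}^{\hA(t)}(\cdot)$, which on any window $[t_0, t_0+T]$ is in turn $O(\gamma T)$-close to $v_{\mathrm{ss}}^{\hA(t_0)}(\cdot)$ by Lipschitz dependence of $v_{\mathrm{ss}}^{\bar A}$ on $\bar A$; a routine Gramian perturbation bound then yields $\int_{t_0}^{t_0+T}\xm\xm^\top\,d\tau \ge \tfrac{\underline\mu}{2}I$ for all $t_0$ beyond the settling time, once $\gamma^\star_{PE} \in (0, \gamma^\star_b]$ is chosen small enough, and enlarging $T$ to absorb the initial segment delivers PE of $\xm$ for all $t \ge 0$.

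The step I expect to be the main obstacle is the uniformity claimed in step one --- that the frozen systems' stationary responses are PE with excitation level and window independent of $\hA$, which rests on order-$(n+1)$ richness of $d$ overcoming any transmission zeros of $A_m(\hA)$ --- together with making the slow-variation tracking estimate of step two quantitative enough that the residual $O(\gamma)$ error remains below the excitation margin $\underline\mu$; as an alternative route for step two, one could instead invoke relaxed-PE and interconnection arguments in the spirit of \cite{panteley2001relaxed}.
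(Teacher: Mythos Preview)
Your proposal is correct and follows the same two-step structure as the paper: (i) PE of the frozen-$\hA$ reference model from controllability of $(A_{\text{cl}}(\hA),B)$ together with richness of $d$, then (ii) transfer to the slowly varying system using $|\dhA|\le\gamma$. The only substantive difference is in the execution of step (ii): the paper invokes a ready-made PE-perturbation result \cite[Thm.~6.1]{mareels1988persistency}, which supplies a threshold $\eta$ on $|A_{\text{cl}}(\hA(s))-A_{\text{cl}}(\hA(\tau))|$ over each window $[t,t+T]$ and then bounds this quantity by $T c\, n^{3/2}\gamma$ via the mean-value theorem and analyticity of $\mathcal{P}$, yielding $\gamma^\star_{PE}=\eta/(Tcn^{3/2})$; you instead build the estimate by hand through the tracking error $\delta=\xm-v_{\mathrm{ss}}^{\hA(t)}$ and a Gramian perturbation. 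Your route is more self-contained but longer; the paper's is a one-line citation. The uniformity-over-$\hA$ of the frozen excitation constants that you flag as the main obstacle is indeed glossed over in the paper (it is implicit in compactness of $\KAL$ and continuity of $A_{\text{cl}}(\cdot)$), so your instinct there is well placed.
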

		Next, we provide a direct consequence of Lemma \ref{lemma:x_m_uPE} for the adaptive error dynamics \eqref{eq:error_dyn}, \eqref{eq:adaptive_error_system}.
		\begin{lemma}\label{lemma:AC_UGAS}
			Let the hypotheses of Lemma \ref{lemma:x_m_uPE} hold and let $\gamma \in (0, \gamma^\star_{PE}]$, where $\gamma^\star_{PE}$ is given in Lemma \ref{lemma:x_m_uPE}.
			Then, the origin $(e, \tKe) = 0$ of system \eqref{eq:error_dyn}, \eqref{eq:adaptive_error_system} is uniformly globally asymptotically stable (UGAS) and uniformly locally exponentially stable (ULES).
		\end{lemma}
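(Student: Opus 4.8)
The plan is to treat system \eqref{eq:error_dyn}, \eqref{eq:adaptive_error_system} as a linear time-varying system in the state $(e, \tKe)$ driven by the exogenous (bounded) signals $\hA(t)$, $\Pta = \mathcal{P}(\hA(t))$, and $x(t) = e(t) + \xm(t)$, and to invoke a standard PE-based convergence result for adaptive error systems. First I would recall from Lemma \ref{lemma:AC_ugs} that the origin is already UGS once $|\dhA(t)|\leq\gamma_b^\star$, which holds here since $\gamma\leq\gamma_{PE}^\star\leq\gamma_b^\star$ by Lemma \ref{lemma:x_m_uPE}; this gives uniform boundedness of all signals and, in particular, uniform bounds on $x(t)$. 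Next, since Lemma \ref{lemma:x_m_uPE} guarantees $\xm(t)$ is PE and, along trajectories converging to the attractor, $e(t)\to 0$, I would argue the regressor $x(t) = e(t)+\xm(t)$ inherits persistency of excitation — more precisely, I would use that a PE signal perturbed by an $\mathcal{L}_2$ (or vanishing, square-integrable) signal remains PE, combined with the fact that $e\in\mathcal{L}_2$ as established in the UGS analysis (the Lyapunov function $V = e^\top \Pta e + \tfrac{1}{\mu}\trace(\tKe^\top\tKe)$ of Lemma \ref{lemma:AC_ugs} has $\dot V \leq -e^\top Q_{\hA} e + (\text{terms from }\dot{\Pta})$, yielding integrability of $|e|^2$).

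With the regressor PE, the core of the argument is to apply a classical theorem on exponential stability of adaptive error systems (e.g., \cite[Thm.~2.5.1 or Lemma~2.6.7]{narendra2012stable}, or \cite[Thm.~4.3.2]{ioannou1996robust}, or the cascade/averaging arguments of \cite{panteley2001relaxed}): for a system of the form $\dot e = A_m(t) e + B w(t)$, $\dot{\tKe} = -\mu B^\top P(t) e\, w(t)^\top$ with $A_m(t)$ uniformly Hurwitz (here $A_m(t) = \hA(t) - BR^{-1}B^\top\mathcal{P}(\hA(t))$, uniformly Hurwitz by Remark \ref{remark:cl_stability} and compactness of $\KAL$), $P(t)$ uniformly positive definite and bounded, and $w(t) = x(t)$ PE, the origin $(e,\tKe)=0$ is UGAS and ULES. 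I would verify the hypotheses: uniform Hurwitzness and boundedness of $A_m(t)$ follow from analyticity of $\mathcal{P}(\cdot)$ on the compact set $\KAL$ (established in Section \ref{sec:erros_system}); $\Pta$ is uniformly bounded below and above for the same reason; and $w = x$ is bounded and PE. UGS is already in hand, so what remains is the attractivity/exponential rate, which the PE condition delivers.

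The main obstacle is the \textbf{regressor PE transfer}: the standard theorems assume the regressor $w(t)$ is PE, but here the natural PE signal is $\xm(t)$, while the one actually appearing in \eqref{eq:adaptive_error_system} is $x(t) = \xm(t) + e(t)$, and $e(t)$ is only known to vanish asymptotically (and to be $\mathcal{L}_2$), not to be small uniformly. I would handle this either (i) by a perturbation lemma showing PE is preserved under additive $\mathcal{L}_2$ perturbations of the regressor, then applying the theorem directly, or (ii) by a two-step/cascade argument: first establish that $(e,\tKe)\to 0$ using an observability-type (integral) argument exploiting that $\int_t^{t+T} x(\tau)x(\tau)^\top d\tau$ is eventually uniformly positive definite (since $e\to 0$ and $\xm$ is PE), then bootstrap to uniform/exponential convergence via the output-injection form and \cite[Thm.~4.3.2]{ioannou1996robust}. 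A secondary technical point is that $A_m(t)$ is time-varying with bounded but nonzero derivative (scaling with $\gamma$ via $\dhA$), so I would either absorb $\dot A_m$ into the perturbation terms — harmless since it is $O(\gamma)$ and the nominal system is exponentially stable — or invoke the robustness of exponential stability under small time-varying perturbations. Once convergence is shown, ULES is immediate from linearity of the $(e,\tKe)$ dynamics and the uniform PE lower bound.
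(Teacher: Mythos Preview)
Your proposal is correct and follows essentially the same route as the paper: UGS from Lemma~\ref{lemma:AC_ugs}, then PE of the regressor $x(t)=\xm(t)+e(t)$, then UGAS/ULES via a standard PE-based result for adaptive error systems. The paper in fact commits to the specific option you list last, namely \cite{panteley2001relaxed}: it invokes the uniform $\delta$-PE notion (Def.~5 there), obtains u-PE of $x(t)$ directly from \cite[Prop.~2]{panteley2001relaxed} (PE of $\xm$ plus the state-dependent perturbation $e$), and concludes UGAS and ULES from \cite[Thms.~1--2]{panteley2001relaxed}; this dispatches in one stroke both the ``regressor PE transfer'' obstacle and the time-varying $A_m(t)$ issue you flag, without needing the $\mathcal{L}_2$ or bootstrap detours.
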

		Now that we have established that every solution $e(t)$ converges exponentially to zero, uniformly from compact sets of initial conditions, we can conclude the convergence analysis by studying the identifier dynamics \eqref{eq:id_error_dyn}.
		\begin{lemma}\label{lemma:id_UGES}
			Let the hypotheses of Lemma \ref{lemma:x_m_uPE} hold and let
			$\gamma \in (0, \gamma^\star_{PE}]$, where $\gamma^\star_{PE}$ is given in Lemma \ref{lemma:x_m_uPE}.
			Then, the origin $(\tilde{\epsilon}, \tA) = 0$ of system \eqref{eq:id_error_dyn}, with input $\xi(t)$, is uniformly globally exponentially stable (UGES).
		\end{lemma}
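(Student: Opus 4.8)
The plan is to exploit the cascade structure of the identifier error dynamics~\eqref{eq:id_error_dyn}. The $\tilde\epsilon$-subsystem $\dot{\tilde\epsilon}=-\lambda\tilde\epsilon$ is UGES with rate $\lambda$, and it influences the $\tA$-subsystem only through the additive term $-\gamma BB^\dagger\tilde\epsilon\xi^\top/(1+\nu|\xi||\epsilon|)$, which is linear in $\tilde\epsilon$ and, along solutions, multiplied by a bounded factor (so it acts as an exponentially decaying input, independent of $\tA$). Hence, by the standard cascade argument for exponentially stable systems, it suffices to show that the $\tA$-dynamics, with $\xi(t)$ treated as an exogenous input and $\tilde\epsilon(t)$ as an exponentially vanishing perturbation, is UGES.

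The crucial preliminary step — and the point at which the closed-loop nature of $\xi$ (cf.\ Remark~\ref{remark:PE_necessity}) must be handled — is to show that $\xi(t)$ is persistently exciting up to an exponentially decaying term. Writing $x=\xm+e$ and using linearity of the filter~\eqref{eq:xi_dyn}, decompose $\xi=\xi_{\mathrm m}+\xi_e$, where $\xi_{\mathrm m},\xi_e$ are the outputs of~\eqref{eq:xi_dyn} driven by $\xm,e$ respectively. By Lemma~\ref{lemma:x_m_uPE} the reference $\xm$ is PE, and since $s\mapsto(s+\lambda)^{-1}$ is a stable, minimum-phase filter, PE is preserved, so $\xi_{\mathrm m}$ is PE. By Lemma~\ref{lemma:AC_UGAS} (UGAS together with ULES) and the global boundedness of Proposition~\ref{lemma:interconnection_boundedness}, $e(t)\to 0$ exponentially, hence so does $\xi_e(t)$. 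Consequently there are $T_0,\alpha>0$ and $t^\star\ge 0$ with $\int_t^{t+T_0}\xi(s)\xi(s)^\top\,ds\ge\alpha I$ for all $t\ge t^\star$.

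For the $\tA$-dynamics I would take $W\coloneqq\frac{1}{2\gamma}|\tA|_F^2$. Using the standard parameter-projection inequality (the projection in~\eqref{eq:identifier} cannot increase $\trace(\tA^\top(\cdot))$ because $A\in\Int(\mathcal C)$) together with $\tA\in\Image(\mapB)$, which gives $BB^\dagger\tA=\tA$, one obtains
\[
\dot W\;\le\;-\frac{|\tA\xi|^2+(\tA\xi)^\top\tilde\epsilon}{1+\nu|\xi||\epsilon|}\;\le\;-\frac{1}{2c_1}|\tA\xi|^2+\tfrac12|\tilde\epsilon|^2,
\]
where $c_1>0$ bounds $1+\nu|\xi||\epsilon|$ along solutions (using boundedness of $\xi$ and $\tA$ and the decay of $\tilde\epsilon$). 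Combining this with $|\dhA|\le\gamma$ from Lemma~\ref{lemma:bounded_A_dot} — which controls the variation of $\tA$ over a window $[t,t+T_0]$ — and the PE lower bound above, the classical persistency-of-excitation estimate gives $W(t+T_0)\le\rho\,W(t)+(\text{exp.\ decaying})$ with $\rho\in(0,1)$ for $t\ge t^\star$, hence exponential decay of $\tA$; extending back over the bounded interval $[0,t^\star]$ and adding the exponential decay of $\tilde\epsilon$ yields UGES of $(\tilde\epsilon,\tA)=0$. Equivalently, one may first invoke the relaxed persistency-of-excitation results of~\cite{panteley2001relaxed} to obtain UGAS, then upgrade to UGES via ULES of the (locally LTV-like) $\tA$-dynamics together with global boundedness.

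The main obstacle is the uniformity demanded by UGES: the time $t^\star$ after which $\xi$ is genuinely PE, and the bound $c_1$, depend a priori on the initial conditions through $\|e\|_\infty$ and through $|\tilde\epsilon(0)|$ (and, if $\mathcal C$ is unbounded, through $\tA(0)$). Absorbing these dependencies into the UGES gain — exploiting that $\tilde\epsilon$, and $e$ over its transient, decay exponentially with an IC-independent rate while $W$ is non-increasing up to the $\tilde\epsilon$-perturbation — is the delicate part. The normalization $1+\nu|\xi||\epsilon|$ (depending itself on $\tA$ and $\tilde\epsilon$) and the Lipschitz projection are the remaining nonlinearities, but both are benign: along solutions the denominator stays bounded away from $0$ and $\infty$, and the projection only improves the Lyapunov estimate.
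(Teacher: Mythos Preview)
Your proposal is correct and follows essentially the same line as the paper: cascade reduction via the UGES $\tilde\epsilon$-subsystem, PE of $\xi$ obtained from PE of $\xm$ (Lemma~\ref{lemma:x_m_uPE}) plus exponential decay of $e$ (Lemma~\ref{lemma:AC_UGAS}) passed through the stable filter~\eqref{eq:xi_dyn}, and then UGES of the projected normalized gradient identifier from PE of the regressor. The paper dispatches the last step more briskly than your windowed Lyapunov sketch: it bounds the normalization $1+\nu|\xi||\epsilon|$ above by $1+\nu\,\xi_M^2\,\tA_M$ (with $\tA_M$ uniform by compactness of $\Theta$, which also disposes of your uniformity concern) to show the normalized regressor is PE, and then invokes a standard textbook convergence theorem for normalized gradient estimators.
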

	
	\subsection{Proof of the Main Results}
	
		\subsubsection{Proof of Theorem \ref{theorem_ARE}}
		
			Pick $\gamma^\star\coloneqq \min\{\gamma_b^\star, \gamma_{PE}^\star\}=\gamma_{PE}^\star$, where $\gamma_b^\star$ is from Proposition \ref{lemma:interconnection_boundedness} and $\gamma_{PE}^\star$ is the one of Lemma \ref{lemma:x_m_uPE}.
			Then, if $\gamma\leq \gamma^\star$, the closed-loop solutions are bounded and forward complete.
			Moreover, $\xm$ is PE.
			The remainder of the proof involves showing the existence of a UGAS attractor using the concept of $\omega$-limit set of a set, see \cite[Def. 6.23]{goebel2012hybrid}. 
			
			By Lemmas \ref{lemma:AC_UGAS} and \ref{lemma:id_UGES}, from any compact set of initial conditions, it holds that $\hA \to A, \tilde{\epsilon}\to 0, e \to 0, \tKe \to 0$ exponentially. Moreover, by Lemma \ref{lemma:MR_ISS}, the model reference subsystem \eqref{eq:model_ref_detailed} is ISS with uniformly bounded input $d(t)$, in particular we have that:
			\begin{equation}
				|\xm| \geq X_{\text{m}} \implies \dot{V}_{\text{m}} \leq -\frac{q}{4}|\xm|^2,
			\end{equation}
			where $X_{\text{m}}$ can be found in \eqref{eq:ultimate_bound_ref_model} and depends on $\|d(\cdot)\|_\infty$, and $V_{\text{m}}$ in \eqref{eq:Lyapunov_for_ref_model} is an ISS Lyapunov function for the reference model. 
			Consider the $\xi$ subsystem in \eqref{eq:swapping}.
			It holds that
			\begin{equation}\label{eq:Lyapunov_for_xi}
				|\xi| \geq \frac{2|x|}{\lambda}\implies \frac{\text{d}}{\text{d}t}\left(\frac{1}{2}|\xi|^2\right)\leq -\frac{\lambda}{2}|\xi|^2.
			\end{equation}
			Denote $\Xi \coloneqq \frac{2}{\lambda}X_\text{m}$ and define the compact set 
			\begin{equation}\label{eq:K_s_star}
				\begin{split}
					\mathcal{K}_{\textup{s}}^\star\coloneqq \{&(w, x, z_{\textup{s}}) \in \mathcal{W}\times\mathbb{R}^{\dimx}\times\mathcal{Z}_{\textup{s}}:\hA = A, \tilde{\epsilon}=0, e=0, |\xm|\leq X_\text{m}, |\xi|\leq \Xi\}\subset \W\times\mathbb{R}^n\times\mathcal{L}_{\textup{s}},
				\end{split}
			\end{equation}
			where $\mathcal{L}_{\textup{s}}$ is the learning set given in \eqref{eq:L_s}. 
			Consider a set of initial conditions $\mathcal{K}_{s} \coloneqq \mathcal{K}_s^\star + c\mathbb{B}$, with $c > 0$ arbitrary, and note that the solutions are empty if they start outside $\W\times \R^n\times\Z_{\textup{s}}$.
			We now prove that $\mathcal{K}_s^\star$ is uniformly attractive from $\mathcal{K}_s$. 
			By the above-mentioned properties for the subsystems $\tA, \tilde{\epsilon}, e, \tKe, \xm$, there exists $T^\prime>0$ such that, for any $\varepsilon>0$, it holds that
			\begin{equation}
					|\tA(t)|\leq \varepsilon, \;\;\; |\tilde{\epsilon}(t)|\leq \varepsilon, \;\;\;|e(t)|\leq \min\left(\varepsilon, \frac{\lambda}{2}\frac{\varepsilon}{3}\right),\;\;\;|\xm(t)| \leq X_\text{m} + \min\left(\varepsilon, \frac{\lambda}{2}\frac{\varepsilon}{3}\right).
			\end{equation}
			for all $t\geq T^\prime$, from which it holds also that
			\begin{equation}
				\begin{split}
					\frac{2|x(t)|}{\lambda} &\leq \frac{2}{\lambda}(|\xm(t)|+|e(t)|)\\
					&\leq \Xi + \frac{\varepsilon}{3} + \frac{\varepsilon}{3} \leq \Xi + \frac{2}{3}\varepsilon.
				\end{split}
			\end{equation}
			Thus, from \eqref{eq:Lyapunov_for_xi}, there exists $T \geq T^\prime$ such that $|\xi(t)|\leq \Xi + \varepsilon$ for all $t\geq T$.
			
			For compactness of notation, denote $\boldsymbol{x}_{\textup{s}} \coloneqq (w, x, z_{\text{s}})$.
			The arguments above have proved that $\mathcal{K}_s^\star$ is uniformly attractive from $\mathcal{K}_s$.
			Namely, for any $\rho > 0$, there exists $T_\rho \geq 0$ such that $|\phi(t, \boldsymbol{x}_{\textup{s}})|_{\mathcal{K}_{\textup{s}}^\star} \leq \rho$, for all $t \geq T_{\rho}$ and $\boldsymbol{x}_{\textup{s}} \in \mathcal{K}_{\textup{s}}$, where $\phi(t, \boldsymbol{x}_{\textup{s}})$ is the solution at time $t$ of the closed-loop system having initial condition $\boldsymbol{x}_{\textup{s}}$.
			
			Denote with $\As\coloneqq \Omega(\mathcal{K}_\textup{s})$ the $\omega$-limit set of $\mathcal{K}_\textup{s}$.
			We want to prove that $\As \subset \mathcal{K}_{\textup{s}}^\star$.
			We do it by contradiction, i.e., we suppose that $\As \subset \mathcal{K}_{\textup{s}}^\star$ is false.
			Under this hypothesis, there exists $\bar{\boldsymbol{x}}_{\textup{s}} \in \As$ and $\rho > 0$ such that $|\bar{\boldsymbol{x}}_{\textup{s}}|_{\mathcal{K}_{\textup{s}}^\star} \geq 3\rho$.
			By definition \cite[Def. 6.23]{goebel2012hybrid}, the $\omega$-limit set of $\mathcal{K}_\textup{s}$ is the set of all points $\boldsymbol{x}_{\textup{s}}$ such that there exist sequences $\boldsymbol{x}_{\textup{s}, n} \in \mathcal{K}_\textup{s}$, $t_n \geq 0$ such that $\lim_{n \to \infty}t_n = \infty$ and $\lim_{n \to \infty}\phi(t_n, \boldsymbol{x}_{\textup{s}, n}) = \boldsymbol{x}_{\textup{s}}$.
			Therefore, by definition of limit, there exists $\bar{n} \in \mathbb{N}$ such that
			\begin{equation}
				|\phi(t_n, \boldsymbol{x}_{\textup{s}, n}) - \bar{\boldsymbol{x}}_{\textup{s}}|\leq \rho, \quad \forall n \geq \bar{n}.
			\end{equation}
			Pick any subsequence $\boldsymbol{x}_{\textup{s}, {n_i}}$, $t_{n_i}$ such that, for $n_i \geq \bar{n}$, then $t_{n_i} \geq T_{\rho}$, where $T_{\rho}$ derives from the uniform attractivity of $\mathcal{K}_{\textup{s}}^\star$ (see above).
			We have thus proved that, for $n_i \geq \bar{n}$, $|\phi(t_{n_i}, \boldsymbol{x}_{\textup{s}, n_i}) - \bar{\boldsymbol{x}}_{\textup{s}}|\leq \rho$, thus $|\phi(t_{n_i}, \boldsymbol{x}_{\textup{s}, n_i})|_{\mathcal{K}_{\textup{s}}^\star} \geq 2\rho$, and at the same time $|\phi(t_{n_i}, \boldsymbol{x}_{\textup{s}, n_i})|_{\mathcal{K}_{\textup{s}}^\star} \leq \rho$ by uniform attractivity of $\mathcal{K}_{\textup{s}}^\star$.
			This is a contradiction, hence necessarily $\As \subset \mathcal{K}_{\textup{s}}^\star$.
			
			To summarize the previous results, we have thus proved that the solutions are globally bounded and forward complete and
			\begin{equation}
				\As\coloneqq \Omega(\mathcal{K}_{\textup{s}})\subset \mathcal{K}_{\textup{s}}^\star\subset \Int(\mathcal{K}_{\textup{s}})\subset\mathcal{K}_{\textup{s}}.
			\end{equation}
			By \cite[Corollary $7.7$]{goebel2012hybrid}, $\As=\Omega(\mathcal{K}_\textup{s})$ is asymptotically stable, with domain of attraction containing $\mathcal{K}_{\textup{s}}$.
			Since $\mathcal{K}_{\textup{s}}$ can be chosen arbitrarily large due to Proposition \ref{lemma:interconnection_boundedness}, we conclude UGAS of $\As$.
	
		\subsubsection{Proof of Theorem \ref{theorem_DRE}}
		
			Consider the reduced-order system with state $\bm{x}_{\text{s}}\coloneqq (w, x, z_{\text{s}})$ and the boundary layer system with state $\hP$.
			Define the indicator functions
			\begin{equation}
				\begin{split}
					\omega_s (\bm{x}_{\text{s}}) &\coloneqq 
					\begin{cases}
						|\bm{x}_{\text{s}}|_{\As} & \qquad\;\bm{x}_{\text{s}} \in \W \times \R^\dimx \times \mathcal{Z}_{\textup{s}}\\
						\infty & \qquad\; \text{elsewhere}
					\end{cases}\\
					\omega_f (\bm{x}_{\text{s}}, \hP) &\coloneqq 
					\begin{cases}
						|\hP-\mathcal{P}(\hA)| & (\bm{x}_{\text{s}}, \hP) \in \W \times \R^\dimx \times \mathcal{Z}\\
						\infty & \text{elsewhere}.
					\end{cases}\\
				\end{split}
			\end{equation}
			By Theorem \ref{theorem_ARE}, the reduced-order system satisfies
			\begin{equation}
				\omega_s (\bm{x}_{\text{s}}(t)) \leq \beta_s (\omega_s (\bm{x}_{\text{s}}(0)), t),
			\end{equation}
			where $\beta_s$ is a class $\mathcal{KL}$ function.
			Moreover, by the DRE properties \cite[Thm. 4]{bucy1967global}, the boundary-layer system $d\hat{P}/d\tau(\tau) = \mathcal{R}(\hP(\tau), \hA)$, with $\hA \in \KAL$ constant and $\tau \coloneqq gt$, satisfies
			\begin{equation}
				\omega_f (\bm{x}_{\text{s}}, \hP(\tau))\leq \beta_f(\omega_f (\bm{x}_{\text{s}}, \hP(0)), \tau),
			\end{equation}
			where $\beta_f$ is a class $\mathcal{KL}$ function.
			From \cite[Thm. 1]{teel2003unified} (Assumptions $1, 3, 4, 7, 8$ can be verified), from any compact set of initial conditions $\mathcal{K} \subset \W \times \R^\dimx\times \Z$ and for any $\delta > 0$, there exists $g^\star>0$ such that, for all $g \geq g^\star$, the solutions are forward complete and satisfy:
			\begin{equation}\label{eq:singular_perturbation}
				\begin{split}
					\omega_s (\bm{x}_{\text{s}}(t)) &\leq \beta_s (\omega_s (\bm{x}_{\text{s}}(0)), t) + \delta\\
					\omega_f (\bm{x}_{\text{s}}(t), \hP(t))&\leq \beta_f(\omega_f (\bm{x}_{\text{s}}(0), \hP(0)), gt) + \delta.
				\end{split}
			\end{equation}
			In particular, choose $\mathcal{K}\coloneqq \As\times P^\star + c\mathbb{B}$, with $c > 0$ arbitrary.
			Reference model dynamics \eqref{eq:model_reference} can be rewritten as:
			\begin{equation}\label{eq:perturbed_model_reference}
				\begin{split}
					\dxm =& (\hA - BR^{-1}B^\top \hP)\xm + Bd\\
					=& (A - BR^{-1}B^\top P^\star)\xm + Bd + (\tA - BR^{-1}B^\top (\hP-P^\star))\xm\\
					=& (A - BR^{-1}B^\top P^\star)\xm + Bd + (\tA - BR^{-1}B^\top (\hP-\Pta + \Pta - \mathcal{P}(A)))\xm .
				\end{split}
			\end{equation}
			Notice that $\xm$ is PE if $\hP = \Pta$ and $\tA = 0$.
			Furthermore, since $\mathcal{P}(\hA)$ is an analytic function of $\hA$, $\xm$ is PE by \cite[Lemma 6.1.2]{sastry1990adaptive} if $|\hP - \Pta|$ and $|\tA|$ are sufficiently small, because the solutions of  \eqref{eq:perturbed_model_reference} are sufficiently close to those with $\hP = \Pta$ and $\tA = 0$.
			Moreover, also $x$ and $\xi$ are PE if $\xm$ is PE and $|e|$ is sufficiently small.
			Choose $\delta > 0$ such that the conditions $\omega_s (\bm{x}_{\text{s}}) \leq 2\delta$ and $\omega_f(\bm{x}_\textup{s}, \hP)\leq 2\delta$ imply that $\xm$, $x$, and $\xi$ are PE.
			Then, pick $g \geq g^\star$, where $g^\star$ is obtained from the considered $\mathcal{K}$ and $\delta$.
			From \eqref{eq:singular_perturbation}, the closed-loop solutions converge in finite time $T$ to a compact set satisfying $\omega_s (\bm{x}_{\text{s}}) \leq 2\delta$ and $\omega_f(\bm{x}_\textup{s}, \hP)\leq 2\delta$.
			Then, for $t \geq T$, $\tilde{\epsilon}\to 0, \hA \to A$ exponentially from Lemma \ref{lemma:id_UGES} since $\xi$ is PE.
			From the local exponential stability of the DRE \cite[Thm. 4]{bucy1967global}, it follows that $\hP \to P^\star$ exponentially.
			By Lemma \ref{lemma:AC_UGAS} and $\hP \to \Pta$, we conclude that $e \to 0, \hKe \to 0$ exponentially.
			As a consequence, the same arguments of Theorem \ref{theorem_ARE} (omitted here to avoid repetition) can be used to show that the compact set
			\begin{equation}\label{eq:mathK_star}
				\begin{split}
					\mathcal{K}^\star&\coloneqq \{(\bm{x}_{\text{s}}, \hP) \in \W \times \R^\dimx\times \Z : \hA = A, 
					\tilde{\epsilon}=0, e=0, |\xm|\leq X_\text{m}, |\xi|\leq \Xi, \hP=P^\star \} \\
					&= \mathcal{K}^\star_s\times P^\star,		
				\end{split}
			\end{equation}
			is uniformly attractive from $\mathcal{K}$, with $\mathcal{K}^\star_s$ given in \eqref{eq:K_s_star} . The same steps as in Theorem \ref{theorem_ARE} allow to prove that $\A\coloneqq\Omega(\mathcal{K})\subset \mathcal{K}^\star\subset \Int(\mathcal{K})\subset \mathcal{K}$, thus $\A$ is uniformly asymptotically stable with domain of attraction containing $\mathcal{K}$, and since $\mathcal{K}$ can be chosen arbitrarily large we can conclude semiglobal uniform asymptotic stability of $\A$.
			
			Finally, we want to prove that $\A = \As \times P^\star$, where $\As=\Omega(\mathcal{K}^\star_s)$.
			In $\A$, it holds that $\hP = P^\star$, $\hA = A$ and $\tilde{\epsilon}=0$, from which it holds that $\hP = \mathcal{P}(\hA)=\mathcal{P}(A)=P^\star$ for all points in this set. 
			For this reason, in $\A$, the vector field of Algorithm~\ref{alg:MRARL} coincides with the vector field of the reduced-order system with $\hA = A$ and $\tilde{\epsilon}=0$. 
			Since the vector fields coincide, we have that in this set solutions $\bm{x}(t)$ of Algorithm \ref{alg:MRARL} can be written as $\bm{x}(t)=\bm{x}_{\text{s}}(t)\times P^\star$, where $\bm{x}_s(t)$ is the solution of the reduced-order system having the same initial conditions.
			
			From \eqref{eq:mathK_star} and $\A \subset \mathcal{K}^\star$, it follows that $\A = \Omega(\mathcal{K}) = \Omega (\mathcal{K}^\star) = \Omega (\mathcal{K}_s^\star \times P^\star) = \mathcal{A}^\prime \times P^\star$. Since for the slow states of Algorithm 1 the solutions coincide with those of the reduced-order system, it follows that $\mathcal{A}^\prime = \As$. As a consequence, $\A = \As \times P^\star$.

\section{Numerical Analysis: Control of a Doubly
	Fed Induction Motor}\label{sec:example}

	In this section, we propose two numerical examples to show the effectiveness of Model Reference Adaptive Reinforcement Learning.
	In the first example, we consider the model of a doubly fed induction motor (DFIM) at constant speed with unknown rotor and stator resistances.
	In the second example, we test the robustness of the proposed algorithm by considering a DFIM with slowly time-varying unknown resistances, due to the motor heating up, and rotor acceleration.
	
	\subsection{Example 1: Constant Parameters}
	
		A DFIM at constant speed can be modeled \cite{leonhard2001control} with a linear system in the form of \eqref{eq:plant_dynamics} with state
		\begin{equation}
			x = (i_{1u}, i_{1v}, i_{2u},i_{2v}) \in \R^4,
		\end{equation}
		where $i_{1u}, i_{1v}$ are the stator currents and $i_{2u}, i_{2v}$ are the rotor currents. The input is
		\begin{equation}
			u = (u_{1u}, u_{1v}, u_{2u}, u_{2v}) \in \R^4,
		\end{equation}
		where $u_{1u}, u_{1v}$ are the stator voltages and $u_{2u}, u_{2v},$ the rotor voltages. System matrices are defined as
		\begin{equation}\label{eq:motor_matrices}
				A = \frac{1}{\bar{L}}
				\begin{bmatrix}
					-L_2R_1 &  -\alpha + \beta & L_mR_2 & \beta_2\\
					\alpha - \beta & -L_2R_1 & -\beta_2 & -L_mR_2 \\
					L_mR_1 & -\beta_1 & -L_1R_2 & -\alpha-\beta_{12}\\
					\beta_1 & L_mR_1 &  \alpha+\beta_{12}& -L_1R_2
				\end{bmatrix}, \qquad B = \frac{1}{\bar{L}}
				\begin{bmatrix}
					L_2 & 0 & -L_m & 0\\
					0 & L_2 & 0 & -L_m\\
					-L_m & 0 & L_1 & 0\\
					0 & -L_m & 0 & L_1
				\end{bmatrix},
		\end{equation}
		where 
		\begin{equation}
			\begin{split}
				\bar{L}&\coloneqq L_1L_2-L_m^2,\qquad\alpha \coloneqq \bar{L}\omega_0, \qquad \beta\coloneqq L_m^2\omega_r\\
				\beta_{12} &\coloneqq L_1L_2\omega_r, \qquad \beta_1 \coloneqq L_1L_m\omega_r,\qquad\beta_2 \coloneqq L_2L_m\omega_r.
			\end{split}
		\end{equation}
		Parameters $R_1, R_2$ are the stator and rotor resistances, while $L_1, L_2, L_m$ are the stator and rotor auto-inductances and the mutual inductance, respectively.
		Finally, $\omega_r$ and $\omega_0$ are the electrical angular speeds of the rotor and the rotating reference frame, which we suppose constant. 
		\begin{remark}
			We suppose to have uncertainties on the parameters $R_1$ and $R_2$.
			This makes the matrix $A$ uncertain in half of its entries.
			In this example $B$ is such that $\Image(\mapB)=\R^{\dimx\times\dimx}$, so Assumption \ref{as:matching} is fulfilled for any $A_0\in \R^{\dimx\times\dimx}$.
		\end{remark}
		Denote the true resistances as $R_1, R_2$.
		We model our uncertainties specifying nominal values $\bar{R}_1, \bar{R}_2$ and radiuses $r_1, r_2>0$ such that 
		\begin{equation}\label{eq:resistance_bounds}
			\begin{split} 
				R_1&\in[\bar{R}_1-r_1, \bar{R}_1+r_1]\\
				R_2&\in[\bar{R}_2-r_2, \bar{R}_2+r_2].
			\end{split}
		\end{equation}
		Next, we define $\mathcal{C}$ as a ball about the nominal $\bar{A}$ (i.e., having the structure \eqref{eq:motor_matrices} with resistances $\bar{R}_1$ and $\bar{R}_2$) containing all possible parameter variation, i.e., 
		\begin{equation}
			\mathcal{C}\coloneqq \{ \hA \in \R^{n\times n}: |\hA-\bar{A}|_F \leq \rho \}
		\end{equation}
		with $\rho>0$ big enough.
		We report in Table \ref{table:params} the physical parameters of the motor. In Table \ref{table:reference_uncertainties}, we specify the values used for the uncertainties and the desired performances.
		\addtocounter{table}{1}
		\begin{table}[t!]
			\caption{Physical parameters of the motor.}
			\label{table:params}
			\centering
			\begin{tabular}{c | c | c | c}
				Parameter & Value & Parameter & Value \\
				\hline
				$L_1$ [H] & $0.02645$ & $R_1$ [$\Omega$]& $0.036$ \\
				$L_2$ [H] & $0.0264$& $R_2$ [$\Omega$]& $0.038$ \\
				$L_m$ [H]& $0.0257$& $\omega_0$ [rad/s]& $2\pi70.8$\\
				$p$ & $3$ & $\omega_r$ [rad/s]& $2\pi62$  \\
			\end{tabular}
		\end{table}
		\begin{table}[t!]
			\caption{Uncertainty parameters for example 1.}
			\label{table:reference_uncertainties}
			\centering
			\begin{tabular}{c | c | c | c}
				Parameter & Value & Parameter & Value \\
				\hline
				$\bar{R}_1$  [$\Omega$]& $0.03$ & $r_1$  [$\Omega$]& $0.01$\\ 
				$\bar{R}_2$  [$\Omega$]& $0.03$ & $r_2$  [$\Omega$]& $0.01$\\
				$\rho$ & $20$ & &
			\end{tabular}
		\end{table}
		\begin{figure}[t!]
			\centering
			\includegraphics[width=0.7\linewidth]{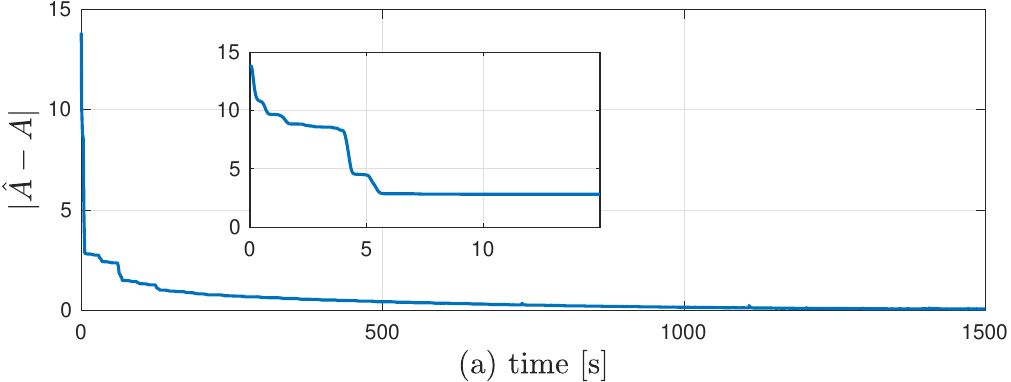}
			
			\includegraphics[width=0.7\linewidth]{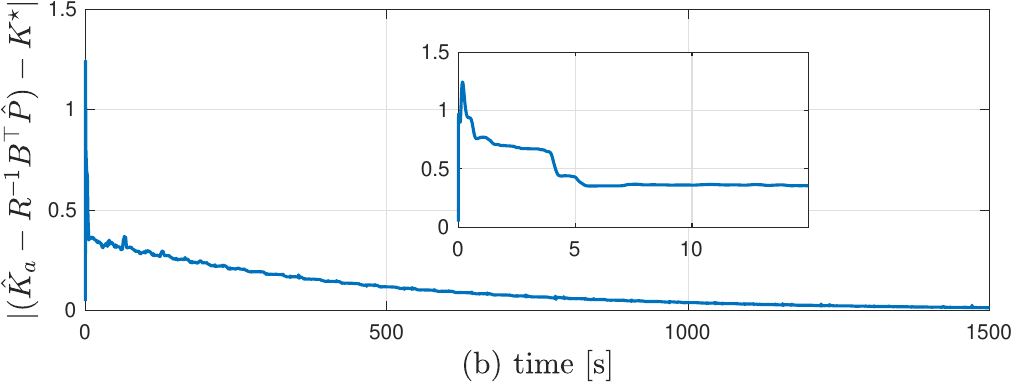}
			
			\caption{Convergence to true $A$ and to optimal gain $K^\star$.}
			\label{fig:estimations_convergence}
			
		\end{figure}
		\begin{figure}[t!]
			\centering
			\includegraphics[width=0.7\linewidth]{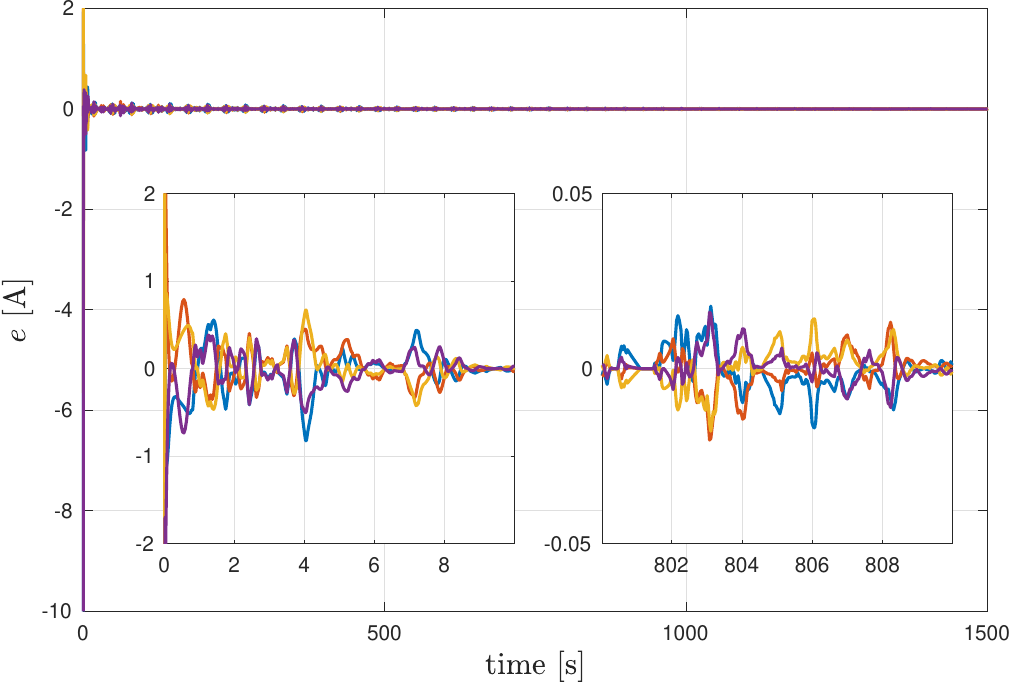}
			\caption{Tracking error between plant and reference model. Different colors stand for different components of $e$.}
			\label{fig:error}
			\vspace{-15pt}
		\end{figure}
		The dither $d(t)$ is designed, on each entry $d_i(t)$, according to
		\begin{equation}\label{eq:dither}
			d_i(t) = 10\sum_{j=1}^4 \text{sawtooth}(2\omega_s ij t), \quad i \in \{1, 2, 3, 4\}
		\end{equation}
		where $\text{sawtooth}(\cdot)$ is a triangular wave of unitary amplitude and $\omega_s=0.2$ rad/s. 
		In Fig. \ref{fig:estimations_convergence}-(a), we show the difference between the estimate $\hA(t)$ and the true matrix $A$. Next, in Fig. \ref{fig:estimations_convergence}-(b), we show how the error between the optimal feedback gain $K^\star$ and the overall applied feedback gain $-R^{-1}B^\top \hP + \hKe$ approaches zero, thus controlling in an optimal way the system. 
		In Fig. \ref{fig:error}, we show for completeness the error between the reference model and the real system, which reaches a small amplitude in a few seconds. 	
	
	\subsection{Example 2: Drifting Parameters and Variable Speed}
	
		In this example, we apply perturbations to the DFIM with model given in \eqref{eq:motor_matrices} to test the robustness of MR-ARL. We consider two perturbations to the nominal model occurring together: the first one is a time-varying resistance due to motor heating up, while the second one is a time-varying rotor speed due to load changes. 
		We model both disturbances with sigmoid functions and we report them in the plots. The temperature disturbance lasts for about $600$ s and brings the temperature from $20$ $^\circ$C to $100$ $^\circ$C, i.e.,  $\Delta T= 80$ $^\circ$C. The speed disturbance is a total increase of speed of $2\pi20$~rad/s occurring in about $60$ s.
		We model the dependence of resistances on temperature with
		\begin{equation}
			R_i (\Delta T) = R_i + \alpha\Delta T,\qquad i \in\{1, 2\},
		\end{equation}
		where $\alpha_{\text{CU}}=4.041\times 10^{-3}$ $\Omega/^\circ$C is the temperature coefficient of resistance of the copper. 
		
		We set new nominal $\bar{R}_1, \bar{R}_2, \bar{\omega}_r$ with associated range $r_1, r_2, r_\omega$ (reported in Table \ref{table:new_uncertainties}) to consider these uncertainties.
		We recalculate $\mathcal{C}$ as in the previous example.  
		Finally, we leave the dither as in \eqref{eq:dither}. 
		
		\begin{table}
			\caption{Uncertainty parameters for example 2.}
			\label{table:new_uncertainties}
			\centering
			\begin{tabular}{c | c | c | c}
				Parameter & Value & Parameter & Value \\
				\hline
				$\bar{R}_1$  [$\Omega$]& $0.2$ & $r_1$  [$\Omega$]& $0.18$ \\
				$\bar{R}_2$  [$\Omega$]& $0.2$ & $r_2$  [$\Omega$]& $0.18$ \\
				$\bar{\omega}_r$  [rad/s]& $2\pi70$ & $r_\omega$  [rad/s]& $2\pi15$ \\
				$\rho$ & $4830$ & &
			\end{tabular}
		\end{table}
		
		\begin{remark}
			Due to the parameter variations, the plant becomes a slowly time-varying system.
			Consistently with the theoretical result, due to the ``small'' variations, the stability properties of Theorem~\ref{theorem_DRE} are practically preserved and recovered when the variations vanish.
		\end{remark}
		
		In Fig. \ref{fig:estimations_convergence2}-(a), we show the difference between the estimate $\hA(t)$ and the true time-varying matrix $A(t)$.
		Notice that as soon as the speed disturbance ends, the gradient estimator is able to adapt and recover convergence of the estimation to a small ball about the true parameters. 
		Next, in Fig. \ref{fig:estimations_convergence2}-(b), we show how the data-driven feedback gain approaches the optimal one.
		Since in this simulation we have a LTV plant, we calculate at each time instant the optimal gain $K^\star (t)$ by solving an LQR problem with constant $A(t)$.
		The importance of the adaptive controller action is particularly clear in presence of the speed disturbance, where the estimated matrix is far from the true one and thus the optimal action is likely to be destabilizing. 
		
		Finally, we show in Fig. \ref{fig:e_2} how the error between the reference model and the real plant is kept bounded also in the presence of these disturbances. 
		
		\begin{figure}
			\centering
			\hspace{-12pt}
			\includegraphics[width=0.7\linewidth]{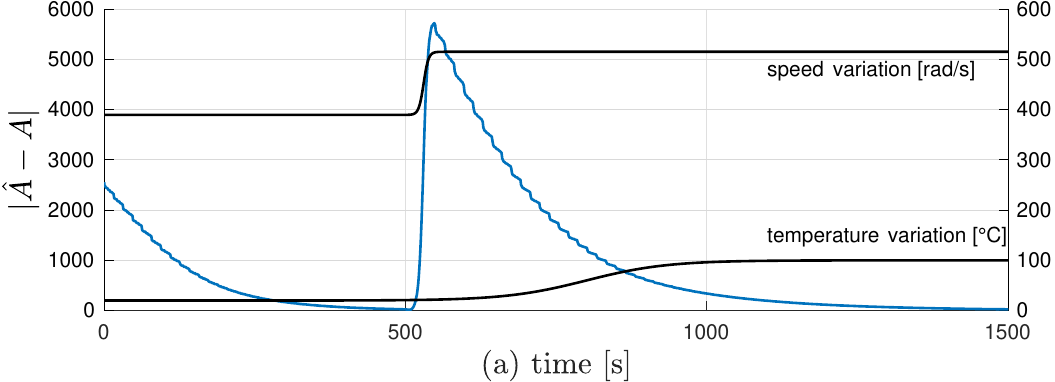}
			
			\includegraphics[width=0.7\linewidth]{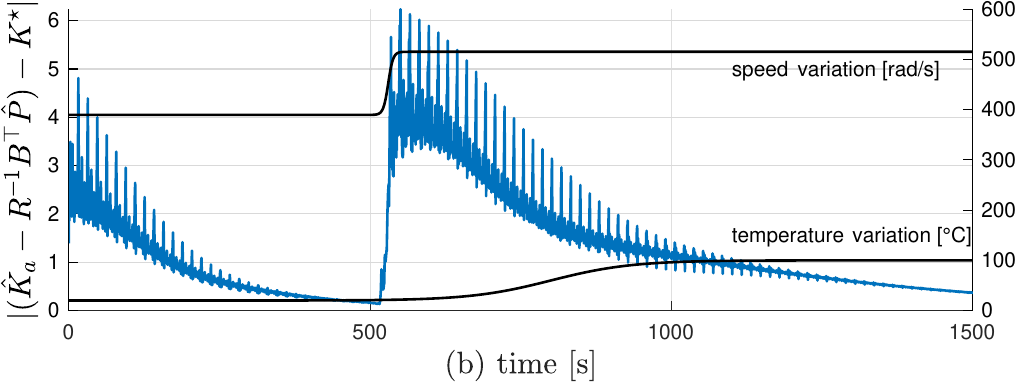}
			
			\caption{Convergence to true $A(t)$ and to optimal gain $K^\star(t)$.}
			\label{fig:estimations_convergence2}
			
		\end{figure}
		
		\begin{figure}
			\centering
			\includegraphics[width=0.7\linewidth]{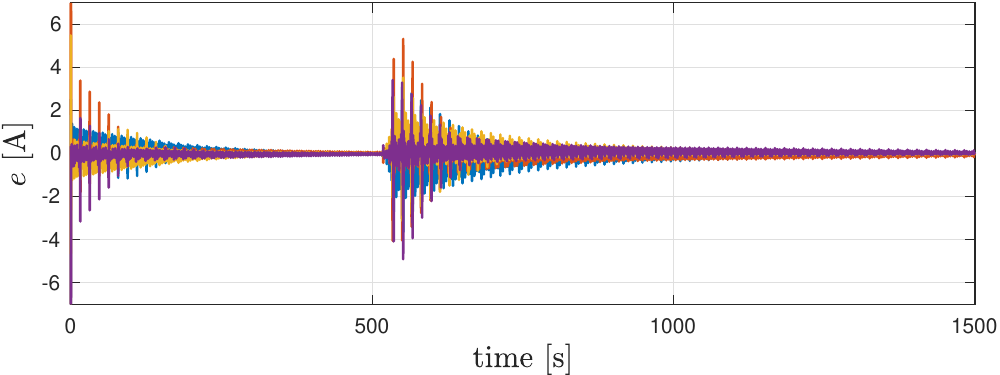}
			
			\caption{Tracking error between plant and reference model. Different colors stand for different components of $e$.}
			\label{fig:e_2}
			
		\end{figure}
		
\section{Conclusions}\label{sec:conclusion}
	
	In this paper, we have addressed the problem of data-driven optimal control of partially unknown linear systems.
	First, we have proposed a framework that formalizes a robustlty stable on-policy data-driven LQR problem in which optimality of the learned strategy is obtained while guaranteeing robust stability of the whole learning and control closed-loop system.
	Next, we have proposed a new solution to this problem consisting in the combination of model reference adaptive control and reinforcement learning.
	As main result, we showed that our design has a semiglobally uniformly asymptotically stable attractor where the plant follows the optimal reference model.
	To demonstrate the effectiveness of the solution, we tested it in the control of a doubly fed induction motor.
	The results show that our solution is also able to manage non-vanishing perturbations typical of real-world applications.
	
\section{Appendix}

	\subsection{Proofs}
	\begin{proof}[Proof of Lemma \ref{lemma:bounded_id}]
		At first, to simplify the expressions for the Lyapunov function, we introduce the following vectorized coordinates:
		\begin{equation}
			\begin{split}
				\hTa \coloneqq \vecc(\hA) \in \R^{n^2},\qquad\tilde{\theta}_\text{A} \coloneqq \vecc(\tilde{A}) \in \R^{n^2}.
			\end{split}
		\end{equation}
		It can be verified the following relation holds:
		\begin{equation}\label{eq:vectorized_gradient_before_projection}
			\begin{split}
				\vecc(BB^\dagger \epsilon\xi^\top ) =  \bar{B}(\xi \otimes I_n)\epsilon,\;\;\;\;\; \bar{B}\coloneqq (I_n \otimes BB^\dagger),
			\end{split}
		\end{equation}
		where $\bar{B}$ defines a projection onto $\Image(I_n \otimes B)\subset \R^{n^2}$.
		Notice that, for any $\hTa \in \Image(I_n \otimes B)$ and $\tau \in \R^{n^2}$, then $\tilde{\theta}_\text{A} \in \Image(I_n \otimes B)$ and, since the scalar product of orthogonal vectors is zero and by idempotence of the projection,
		\begin{equation}\label{eq:projection_relation}
				\tilde{\theta}_\text{A}^\top \tau = \tilde{\theta}_\text{A}^\top (\tau_\parallel + \tau_\perp) = \tilde{\theta}_\text{A}^\top \tau_\parallel + 0 = \tilde{\theta}_\text{A}^\top \bar{B}\tau_\parallel=\tilde{\theta}_\text{A}^\top \bar{B}(\tau_\parallel+\tau_\perp)= \tilde{\theta}_\text{A}^\top \bar{B}\tau,
		\end{equation}
		where $\tau_\parallel \in \Image(I_n \otimes B)$ and $\tau_\perp \in (I_n \otimes B)^\perp$.
		We rewrite \eqref{eq:identifier} by using the vectorized coordinates defined above:	
		\begin{equation}\label{eq:identifier_vectorized}
			\begin{split}
				\dhTa = \proj_{\vecc^{-1}\hTa \in \mathcal{C}} \!
				\left\{- \gamma\bar{B} \frac{(\xi \otimes I_{\dimx})\epsilon}{1 + \nu |\xi||\epsilon|}\right\}.
			\end{split}
		\end{equation}
		The computations from here are similar to \cite[Lemma 6.1]{krstic1995nonlinear} but we report them for the reader's convenience.
		Define
		\begin{equation}\label{eq:identifier_Lyapunov}
			V_A(\tilde{\epsilon}, \tTa) \coloneqq  \frac{1}{\lambda}|\tilde{\epsilon}|^2+ \frac{1}{2\gamma}|\tTa|^2 ,
		\end{equation}
		which is positive definite with respect to $(0, 0)$ and radially unbounded.
		Note that $\epsilon = (\xi \otimes I_n)^\top \tTa + \tilde{\epsilon}$.
		Then, using \eqref{eq:projection_relation} and \cite[Lemma E.1]{krstic1995nonlinear} to treat the projection operator $\proj_{\hA \in \mathcal{C}}\{\cdot\}$, the derivative of $V_A$ along the solutions of \eqref{eq:id_error_dyn} is
		\begin{equation}
			\begin{split}
				\dot{V}_A &= - 2|\tilde{\epsilon}|^2 + \tTa^\top\proj_{\vecc^{-1}\hTa \in \mathcal{C}} \!
				\left\{- \bar{B} \frac{(\xi \otimes I_{\dimx})\epsilon}{1 + \nu |\xi||\epsilon|}\right\} \\
				&\leq - 2|\tilde{\epsilon}|^2 - \frac{\tTa^\top \bar{B} (\xi \otimes I_{\dimx})\epsilon}{1 + \nu |\xi||\epsilon|}  = - 2|\tilde{\epsilon}|^2 - \frac{\tTa^\top (\xi \otimes I_{\dimx})\epsilon}{1 + \nu |\xi||\epsilon|} \\
				&\leq - 2|\tilde{\epsilon}|^2- \frac{|\epsilon|^2 - \tilde{\epsilon}^\top\epsilon}{1 + \nu |\xi|^2} \\
				&\leq - 2|\tilde{\epsilon}|^2- \frac{1}{4}\frac{|\epsilon|^2}{(1 + \nu |\xi|^2)^2} + \frac{\tilde{\epsilon}^\top\epsilon}{1 + \nu |\xi|^2}-\frac{3}{4}\frac{|\epsilon|^2}{1 + \nu |\xi|^2}  \\
				&= - |\tilde{\epsilon}|^2 - \left(\frac{1}{2}\frac{|\epsilon|}{1 + \nu |\xi|^2} - \tilde{\epsilon} \right)^2 -\frac{3}{4}\frac{|\epsilon|^2}{1 + \nu |\xi|^2} \leq 0\\
			\end{split}
		\end{equation}
		implying that $(\tilde{\epsilon}(t), \tilde{\theta}_A(t))$ is contained for all $t \in [0, t_f)$ in a compact sublevel set of $V_A$.
		We conclude the proof by recalling \cite[Lemma E.1]{krstic1995nonlinear} to ensure $\hA(t) \in \KAL$, for all its domain of existence.
	\end{proof}
	\begin{proof}[Proof of Lemma \ref{lemma:bounded_A_dot}]
		Using \cite[Lemma E.1]{krstic1995nonlinear} to treat the projection operator $\proj_{\hA \in \mathcal{C}}\{\cdot\}$ and the fact that $|\bar{B}| = 1$ due to \eqref{eq:vectorized_gradient_before_projection}, we can bound $|\dhA|$ as follows:
		\begin{equation}\label{eq:bound_for_hA_dot}
			|\dhA| \leq |\dhA|_F = |\dhTa| \leq \gamma|\bar{B}|\frac{|(\xi \otimes I_n)||\epsilon|}{1+\nu|\xi||\epsilon|} \leq \frac{\gamma|\xi||\epsilon|}{1+\nu|\xi||\epsilon|} \leq \gamma.
		\end{equation}
	\end{proof}
	
	\begin{proof}[Proof of Lemma \ref{lemma:MR_ISS}]
		Function $\Pta$ being continuous and $\KAL$ a compact set, there exist scalars $p_{\text{min}}$, $p_{\text{max}} > 0$ such that
		\begin{equation}\label{eq:P_bounds}
			p_{\text{min}}I_n\leq\Pta \leq p_{\text{max}}I_n, \quad \forall \hA \in \KAL.
		\end{equation}
		Then, define the Lyapunov function
		\begin{equation}\label{eq:Lyapunov_for_ref_model}
			V_{\text{m}}(\xm, t)\coloneqq \xm^\top \Pta \xm 
		\end{equation}
		which is positive definite and radially unbounded, and whose derivative along the solutions of \eqref{eq:model_reference} is given by:
		\begin{equation}\label{eq:vm_dot}
			\begin{split}
				\dot{V}_{\text{m}}=\; & \xm^\top \left( \Pta \Acl + \! \Acl^\top\!\! \Pta \right)\xm  + \xm^\top \left(\frac{\partial\Pta}{\partial \hA}\odot\dhA \right)\xm +2\xm^\top \Pta Bd,
			\end{split}
		\end{equation}
		where 
		\begin{equation}\label{eq:Acl_def}
			\Acl \coloneqq \hA - BR^{-1}B^\top \Pta
		\end{equation}
		and the product $\odot$ is defined as
		\begin{equation}
			\begin{split}
				\frac{\partial\Pta}{\partial \hA}\odot\dhA = \sum_{i=1}^n\sum_{j=1}^n \frac{\partial\Pta}{\partial [\hA]_{ij}}[\dhA]_{ij},
			\end{split}
		\end{equation}
		with $[\hA]_{ij}$ the $i$-th row and $j$-th column entry of matrix $\hA$.
		Since $\hP = \Pta$ solves at each time instant ARE~\eqref{eq:ARE_implementation}, it holds that:
		\begin{equation}\label{eq:are_lyap}
			\begin{split}
				\Acl&\Pta \! + \! \Pta \Acl=\!\underbrace{-Q\!-\!\Kta^\top \!\!R\Kta}_{\eqqcolon -\bar{Q}(\hA)},
			\end{split}
		\end{equation}
		where from Assumption \ref{as:ctrl_det} and $\KAL$ being compact, $\bar{Q}(\hA) \geq q>0$ for all $\hA \in \KAL$, with $q$ defined as
		\begin{equation}\label{eq:q_bound}
			q \coloneqq \min_{\hA \in \KAL} \lambda_{\min}\left(-Q-\Pta B RB^\top\Pta\right),
		\end{equation}
		where $\lambda_{\min}(\cdot)$ denotes the smallest eigenvalue of a matrix.
		Define $c\coloneqq \max_{i,j\in\{1, \ldots, n\}} \left\{ \max_{\hA \in \KAL} \left| \frac{\partial\Pta}{\partial[\hA]_{ij}}\right|\right\}$,
		then we obtain
		\begin{equation}\label{eq:P_derivative_bound}
			\begin{split}
				\left|\frac{\partial\Pta}{\partial \hA}\odot\dhA\right| &= \left|\sum_{i=1}^n\sum_{j=1}^n \frac{\partial\Pta}{\partial [\hA]_{ij}}[\dhA]_{ij}\right|\\
				&\leq c \sum_{i=1}^n\left(\sum_{j=1}^n |[\dhA]_{ij}|\right) \leq cn\max_{1 \leq i \leq n}\sum_{j=1}^n |[\dhA]_{ij}| \leq c n^{\frac{3}{2}} |\dhA|.
			\end{split}
		\end{equation}
		By letting $|\dhA| \leq \gamma^\star_b \coloneqq q/(2 c n^{\frac{3}{2}})$, \eqref{eq:vm_dot} becomes
		\begin{equation}
			\begin{split}
				\dot{V}_{\text{m}} = &\; -\xm^\top \left(\bar{Q}(\hA) - \frac{\partial \Pta}{\partial \hA}\odot\dhA  \right)\xm +  2\xm^\top \Pta Bd\\
				\leq & -(q - c\rho|\dhA| ) |\xm|^2 +  2 p_{\max}|B||\xm||d|\\
				\leq & -\frac{q}{2}|\xm|\left(|\xm| - \frac{4p_{\max}|B| |d|}{q}\right).
			\end{split}
		\end{equation}
		Therefore,
		\begin{equation}\label{eq:ultimate_bound_ref_model}
			|\xm| \geq \frac{8p_{\max}|B| |d|}{q} \implies \dot{V}_{\text{m}} \leq -\frac{q}{4}|\xm|^2,
		\end{equation}
		which concludes the statement.
	\end{proof}
	
	\begin{proof}[Proof of Lemma \ref{lemma:AC_ugs}]
		At first, to simplify the expressions for the Lyapunov function, we introduce the following vectorized coordinates:
		\begin{equation}
			\hTe \coloneqq \vecc(\hKe) \in \R^{mn}, \qquad \tTe  \coloneqq \vecc(\tKe)\in \R^{mn}.
		\end{equation} 
		We rewrite dynamics \eqref{eq:error_dyn} and \eqref{eq:adaptive_error_system} by using the vectorized coordinates above defined:
		\begin{equation}\label{eq:vectorized_adaptive_error_system}
			\begin{split}
				\de &=  \Acl e + B\hKe x - B\Ke x\\
				&=\Acl e + B(x \otimes I_{\dimu})^\top \tTe,\\
				\dtTe &= -\mu (x \otimes I_{\dimu}) B^\top\Pta e.
			\end{split}
		\end{equation}
		Consider the Lyapunov function
		\begin{equation}
			\begin{split}
				V_e(e, \tTe, t):= e^\top \Pta e + \frac{1}{\mu} | \tTe|^2,
			\end{split}
		\end{equation}
		which is positive definite and radially unbounded.
		The time derivative of $V_e$ along the trajectories of~\eqref{eq:vectorized_adaptive_error_system} is given by
		\begin{equation}\label{eq:daptive_error_dyn}
			\begin{split}
				\dot{V}_e =&\; e^\top \left(\Pta \Acl+ \Acl^\top \Pta + \frac{\partial \Pta}{\partial \hA}\odot\dhA\right)e\\
				&+2e^\top\Pta B(x \otimes I_{\dimu})^\top\tTe -\frac{2}{\mu} \tTe^\top (\mu (x \otimes I_{\dimu}) B^\top\Pta e)\\
				=& -e^\top \left(\bar{Q}(\hA) - \frac{\partial\Pta}{\partial \hA}\odot \dhA \right)e \leq -\frac{q}{2} |e|^2 \leq 0,
			\end{split}
		\end{equation}
		where $\Acl$ is defined in \eqref{eq:Acl_def}, $\bar{Q}(\hA)$ is given in~\eqref{eq:are_lyap}, and $q$ is found in~\eqref{eq:q_bound}.
		We have ensured that $(e(t), \tTe(t))$ is contained for all $t \in [0, t_f)$ in a compact sublevel set of $V_e$, thus concluding the proof.
	\end{proof}
	
	\begin{proof}[Proof of Lemma \ref{lemma:x_m_uPE}]
		For all $\hA \in \KAL$, pair $(\Acl, B)$ in \eqref{eq:model_ref_detailed}, with $\Acl$ given in \eqref{eq:Acl_def}, is controllable because $(\hA, B)$ is controllable from Assumption \eqref{as:ctrl_det}.
		Additionally, the origin of system $\dxm = \Acl\xm$ is UGES from Lemma \ref{lemma:MR_ISS}.
		\par From classical results on PE \cite[\S 5.6.4]{ioannou1996robust}, if $\dhA = 0$ then $\xm(t)$ is PE, i.e, there exist $T > 0$, $\alpha > 0$ such that
		\begin{equation}
			\int_{t}^{t + T}\xm(s)\xm(s)^\top \text{d}s \geq \alpha I_{\dimx}, \quad \forall t \geq 0.
		\end{equation}
		By \cite[Thm. 6.1]{mareels1988persistency}, there exist a constant scalar $\eta > 0$, such that, if
		\begin{equation}\label{eq:A_cl_slowly_varying}
			|A_{\text{cl}}(\hA(s)) - A_{\text{cl}}(\hA(\tau))| \leq \eta, \quad \forall s, \tau \in [t, t + T],
		\end{equation}
		for all $t\geq0$, then $\xm(t)$ is PE also when $\dhA \neq 0$.
		Recall that $\Acl \coloneqq \hA - BR^{-1}B^\top\Pta$ is an analytic function of $\hA$ \cite[Thm. 4.1]{ran1988parameter}. Thus, from the mean-value theorem and similar computations to \eqref{eq:P_derivative_bound}, we obtain:
		\begin{equation}
			\begin{split}
				|A_{\text{cl}}(\hA(s)) - A_{\text{cl}}(\hA(\tau))| &= \left|(s-\tau)\frac{\partial \Acl}{\partial \hA}\odot\dhA(\varsigma)\right|\\
				&\leq |s-\tau|c n^{\frac{3}{2}} |\dhA(\varsigma)|,
			\end{split}
		\end{equation}
		where $\varsigma \in [s, \tau]$, $c\coloneqq \max_{i,j \in \{1, \ldots, n\}} \left\{\max_{\hA \in \KAL}\left|\frac{\partial \Acl}{\partial [\hA]_{ij}}\right|\right\}$.
		From the fact that $|\dhA(\cdot)|\leq \gamma$, we conclude that for  $\gamma^\star_{PE}\coloneqq \eta/(Tcn^{\frac{3}{2}})$, if $\gamma \in (0, \gamma^\star_{PE}]$, then bound $\eta$ in \eqref{eq:A_cl_slowly_varying} is enforced and thus $\xm(t)$ is PE.
	\end{proof}
	
	\begin{proof}[Proof of Lemma \ref{lemma:AC_UGAS}]
		From Lemma \ref{lemma:AC_ugs} and the solutions being forward complete, it holds that the origin $(e, \tTe)=0$ of system \eqref{eq:vectorized_adaptive_error_system} is UGS.
		Note that the regressor in \eqref{eq:adaptive_error_system} is given by $x(t)$.
		Therefore, if $x(t)$ is uniformly PE (u-PE) as in \cite[Def. 5]{panteley2001relaxed}, 
		then UGAS and ULES of $(e, \tTe) = 0$ follows from \cite[Thm. 1 and 2]{panteley2001relaxed}.
		To prove u-PE of $x(t)$, note that $x(t) = \xm(t) + e(t)$, where $\xm(t)$ is PE from Lemma \ref{lemma:x_m_uPE}.
		Therefore, we conclude u-PE of $x(t)$ from \cite[Prop. 2]{panteley2001relaxed}.
	\end{proof}
	\vspace{0.5cm}
	\begin{proof}[Proof of Lemma \ref{lemma:id_UGES}]
		From Lemma \ref{lemma:bounded_id} and UGES of the $\tilde{\epsilon}$ subsystem, we only need to prove UGES of system \eqref{eq:id_error_dyn} with $\tilde{\epsilon} = 0$, which we write here in vectorized coordinates:
		\begin{equation}\label{id_reduced_dyn}
			\dot{\tilde{\theta}}_A =\! \proj_{\vecc^{-1}\hTa \in \mathcal{C}} \!
			\left\{- \gamma \bar{B} \frac{(\xi \otimes I_n)(\xi \otimes I_n)^\top\tilde{\theta}_A}{1 + \nu |\xi||\tA \xi|}\right\}.
		\end{equation}
		Since the directions where learning happens are unchanged by the projection operator and by $\bar{B}$, we are interested in studying regressor $\bar{\xi}(t)\coloneqq\frac{\xi(t)\otimes I_n}{\sqrt{1+\nu|\xi(t)||\tA(t)\xi(t)|}}$ in order to prove our result.
		Given a small enough gain $\gamma$, it holds from Lemma~\ref{lemma:x_m_uPE} that $x_m(t)$ is PE, while $e(t) \to 0$ exponentially fast from Lemma \ref{lemma:AC_UGAS}.
		From \eqref{eq:xi_dyn}, $\xi(t)$ is a filtered version of the PE signal $\xm(t) + e(t)$, thus $\xi(t)$ is PE \cite[Lemma. 4.8.3]{ioannou1996robust}.
		Since all signals are bounded and $(\xi \otimes I_n)(\xi \otimes I_n)^\top = (\xi\xi^\top)\otimes I_n$, PE of $\xi(t)$ implies that
		\begin{equation}
			\int_{t}^{t + T}\bar{\xi}(s)\bar{\xi}(s)^\top\textup{d}s \geq 	\int_{t}^{t + T}\!\! \frac{(\xi(s)\xi(s)^\top)\otimes I_n}{1+\xi_M^2\tA_M}\textup{d}s \geq \alpha I_{\dimx^2},
		\end{equation}
		for some $T$, $\alpha > 0$ and all $t \in \mathbb{R}_{\geq 0}$, with $\xi_M\coloneqq \sup_{t \in \R} |\xi(t)|$, $\tA_M\coloneqq \sup_{t \in \R} |\tA(t)|$, thus $\bar{\xi}(t)$ is PE.
		From \cite[Thm. 8.5.6]{ioannou1996robust}, we conclude that $\tA = 0$ is UGES.
	\end{proof}
	\vspace{0.5cm}
		
	\bibliography{adaptive_rl_bib}
	\bibliographystyle{IEEEtran}

\vfill\null

\end{document}